  \newcommand{\whenbook}[1]{%
    #1%
  }%
\providecommand{\whenbook}[1]{}%
\NewDocumentEnvironment{When}{m +b}{%
  \providebool{#1}%
  \ifbool{#1}{#2}{}%
}{}%
\NewDocumentEnvironment{Unless}{m +b}{%
  \providebool{#1}%
  \ifbool{#1}{}{#2}%
}{}%
  \newcommand{\whendraft}[1]{#1}%
    \newcommand{\unlessdraft}[1]{}%
\NewDocumentEnvironment{Draft}{+b}{%
  \whendraft{#1}%
}{%
}%
\NewDocumentEnvironment{DRAFT}{O{BlueViolet}}{%
  \begin{Draft}%
    \color{#1}%
  }{%
  \end{Draft}
}%
\crefname{invariantsi}{invariant}{invariants}
\NewDocumentCommand{\undernote}{s O{blue} m m}{
  \IfBooleanT{#1}{\smash}%
  {\color{#2} %
    \underbrace{\normalcolor%
      #4}_{\mathclap{\text{#3}}} %
  }%
  \IfBooleanT{#1}{\vphantom{#4}}
}
\newcommand{\defterm}[1]{{\boldmath\normalfont \bfseries #1}}%
\renewcommand{\defterm}{\emph}%
\g@addto@macro\bfseries{\boldmath}
\titlespacing*{\paragraph}{%
  0pt}{
  {\medskipamount}}{
  1em}
\titleformat{\subparagraph}[runin]{\itshape}{0pt}{}{}%
\titlespacing*{\subparagraph}{%
  0pt}{
  {\medskipamount}}{
  1em}
\newlength{\myalphabet}                %
\newlength{\mywidth}                   %
\newlength{\mymargin}                  %
\providecommand{\poly}{\fparnew{\operatorname{poly}}}
\definecolor{calypso}{RGB}{50, 104, 145} %
\definecolor{almostblack}{RGB}{18, 18, 18} %
\begin{document}

\providecommand{\incut}{\fparnew{\partial^{-\!}}}
\newcommand{\optS}{S^{\star}}   %
\newcommand{\MFTime}{\fparnew{\mathcal{T}}}%
\providecommand{\capacity}{\fparnew{c}}        %
\newcommand{\flow}{\fparnew{f}}%
\newcommand{\optT}{T^{\star}}  %
\newcommand{\comp}[1]{\vphantom{#1}\bar{#1}}%
\newcommand{\optt}{t^{\star}}               %
\newcommand{\involume}{\fparnew{\operatorname{vol}^-}} %
\newcommand{\supersink}{t^{\star}}                     %
\providecommand{\incutcap}[1]{\capacity{\incut{#1}}}         %
\providecommand{\involume}{\fparnew{\operatorname{vol}^-}} %
\providecommand{\incutsize}{\incutcap}                     %
\providecommand{\cutsize}{\capacity}                       %
\providecommand{\optlambda}{\lambda^{\star}}               %
\providecommand{\indegree}{\fparnew{\operatorname{deg}^-}} %
\providecommand{\outdegree}{\fparnew{\operatorname{deg}^+}} %
\providecommand{\vincut}{\fparnew{N^-}} %
\providecommand{\voutcut}{\fparnew{N^+}} %
\providecommand{\inneighbors}{\fparnew{N^-}}
\providecommand{\outneighbors}{\fparnew{N^+}}
\providecommand{\XXin}[1]{#1_{\textsc{in}}}%
\providecommand{\XXout}[1]{#1_{\textsc{out}}}%
\providecommand{\tin}{\XXin{t}}            %
\providecommand{\tout}{\XXou{t}}            %
\providecommand{\Tin}{\XXin{T}}             %
\providecommand{\Tout}{\XXout{T}}           %
\providecommand{\uout}{\XXout{u}}          %
\providecommand{\uin}{\XXin{u}}  %
\providecommand{\vout}{\XXout{v}}          %
\providecommand{\vin}{\XXin{v}}  %
\providecommand{\rout}{\XXout{r}} %
\providecommand{\xin}{\XXin{x}}   %
\providecommand{\xout}{\XXout{x}} %
\renewcommand{\xout}{\XXout{x}}   %
\providecommand{\Xin}{\XXin{X}}   %
\providecommand{\Xout}{\XXout{X}} %
\providecommand{\Uin}{\XXin{U}} %
\providecommand{\Uout}{\XXout{U}} %
\providecommand{\Tin}{\XXin{T}}   %
\providecommand{\Tout}{\XXin{T}}   %

\newcommand{\authornote}{%
  \texttt{krq@purdue.edu}.  Purdue University, West Lafayette,
  Indiana. Supported in part by NSF grant CCF-2129816.}

\title{Approximating Directed Connectivity in Almost-Linear Time}

\author{Kent Quanrud}

\maketitle

\begin{abstract}
  We present randomized algorithms that compute $\epsmore$-approximate
  minimum global edge and vertex cuts in weighted directed graphs in
  $\bigO{\log{n}^4 / \eps}$ and $\bigO{\log{n}^5/\eps}$
  single-commodity flows, respectively. With the almost-linear time
  flow algorithm of \cite{CKL+22}, this gives almost linear time
  approximation schemes for edge and vertex connectivity. By setting
  $\eps$ appropriately, this also gives faster exact algorithms for
  small vertex connectivity.

  At the heart of these algorithms is a divide-and-conquer technique
  called ``shrink-wrapping'' for a certain well-conditioned rooted
  Steiner connectivity problem. Loosely speaking, for a root $r$ and a
  set of terminals, shrink-wrapping uses flow to certify the
  connectivity from a root $r$ to some of the terminals, and for the
  remaining uncertified terminals, generates an $r$-cut where the sink
  component both (a) contains the sink component of the minimum
  $(r,t)$-cut for each uncertified terminal $t$ and (b) has size
  proportional to the number of uncertified terminals. This yields a
  divide-and-conquer scheme over the terminals where we can divide the
  set of terminals and compute their respective minimum $r$-cuts in
  smaller, contracted subgraphs.
\end{abstract}

\unmarkedfootnote{\authornote}

\section{Introduction}

Edge and vertex connectivity are two of the more basic problems in
graph algorithms.  For a directed graph $G = (V,E)$, the \defterm{edge
  connectivity} of $G$ is the minimum weight of any subset of edges
$F \subseteq E$ for which $G - F$ is not strongly connected.  The
minimizing set of edges $F$ is called the \defterm{minimum (global)
  edge cut}.  The \defterm{vertex connectivity} is the minimum weight
of any subset of vertices $U \subseteq V$ for which $G - U$ is not
strongly connected, and the minimizing set of vertices $U$ is called
the \defterm{minimum (global) vertex cut}.  Either problem can be
solved in polynomially many single-commodity flows. Here we are
interested in faster algorithms.

Let $m = \sizeof{E}$ denote the number of edges and $n = \sizeof{V}$
the number of vertices in $G$. For edge connectivity, the fastest
algorithms for (weighted) directed minimum cut run in strongly
polynomial $\bigO{m n \log{n^2 / m}}$ time \cite{HaoOrlin} and in
weakly polynomial $\min{m^{1+o(1)} \sqrt{n}, m^{2/3+o(1)} n}$
randomized time \cite{CLNPQS21}. There is also an
$\bigO{m \lambda \log{n^2 / m}}$-time algorithm for uncapacitated
graphs with edge connectivity $\lambda$ \cite{gabow-95}, and an
$\apxO{n^2}$ randomized time algorithm for simple graphs (with no
parallel edges) \cite{cq-simple-connectivity}. ($\apxO{\cdots}$ hides
logarithmic factors.)  All these algorithms directly solve the closely
related \emph{rooted connectivity} problem. For a fixed root
$r \in V$, the rooted edge connectivity is the minimum weight of any
cut $F \subseteq E$ such that $r$ cannot reach all of $V$ in $G - F$;
the minimizing cut is called the minimum \defterm{$r$-cut}. Edge
connectivity can be solved by rooted edge connectivity by taking any
vertex as the root and considering both $G$ and the graph reversing
all the edges in $G$.

Directed edge connectivity algorithms might take inspiration from the
remarkable success of undirected edge connectivity.
For undirected minimum cut, there has
been a randomized, $\apxO{m/\eps^2}$-time $\epsmore$-approximation
algorithm since 1993 \cite{karger-93}, and a nearly linear time exact
algorithm since 1996 \cite{karger-00}. A recent line of work has also
derandomized these bounds \cite{KT19,LP20,Li21,HLRW24}.  These
achievements beg to ask if there are nearly linear time algorithm for
directed graphs.

For weighted directed vertex connectivity, until very recently, the
state of the art was $\apxO{mn}$ randomized time \cite{hrg-00}.  In
unweighted directed graphs the minimum vertex cut can be computed in
$\bigO{\lambda^{5/2} m + m n}$, $\bigO{\lambda m n^{3/4} + mn}$,
$\apxO{\lambda^2 m}$, $\apxO{n \lambda^3 + \lambda^{3/2} \sqrt{m} n}$,
$\apxO{\lambda n^2}$, $mn^{1 - 1/12 +o(1)}$, or $n^{2+ o(1)}$ time,
where $\lambda$ is the vertex connectivity
\cite{Gabow2000b,NSY19,FNSYY20,cq-simple-connectivity,LNPSY25}. A
$\epsmore$-minimum vertex cut can be computed in $\apxO{n^2 / \eps}$
time in simple graphs and $\apxO{n^2 / \eps^2}$ in weighted graphs
\cite{cq-simple-connectivity,CLNPQS21}.  Similar to edge connectivity,
undirected graphs have faster algorithms for vertex
connectivity. There is an almost-linear time algorithm in unweighted
undirected graphs \cite{LNPSY25}, and recently, \cite{CT25} gave
$mn^{0.99 + o(1)}$ and $m^{1.5 + o(1)}$ time algorithms in weighted
undirected graphs.  Very recently, \cite{JMY25} gave a reduction from
directed vertex connectivity to undirected vertex connectivity on a
dense graph with $\bigO{n}$ vertices and $\bigO{n^2}$ edges. Thus
\cite{CT25} translates to a $\bigO{n^{2.99+o(1)}}$-time algorithm for
weighted directed graphs.

The discussion above highlights only recent and state-of-the-art
bounds out of a much larger literature of connectivity algorithms.
For a more comprehensive survey of previous work, see
\cite[\S15.2a and \S15.3a]{Schrijver}.

Several of the algorithms (e.g.,
\cite{LP20,cq-simple-connectivity,CLNPQS21,LNPSY25,CT25}) mentioned
above use $(s,t)$-maximum flow as a black box.  Maximum flow has
recently seen tremendous theoretical improvements, culminating in an
almost-linear, $m^{1 + o(1)}$ running time for polynomially bounded
capacities \cite{CKLPGS22,BCK+23}.

\paragraph{Almost-linear time approximations.} This work puts forth
the first almost-linear time approximation algorithms for both edge
and vertex connectivity. Here we are given an error parameter
$\eps > 0$, and desire edge or vertex cuts with weight at most
$\epsmore$-factor greater than the true minimum weight.  We solve
either connectivity problem with $\bigO{\polylog{n} / \eps}$ single
commodity flows; the almost-linear running time then follows from
\cite{CKLPGS22}.

Formally, for directed edge connectivity, we prove the following.
\begin{theorem}
  A $\epsmore$-minimum rooted and global edge cut in a directed graph
  can be computed with high probability in randomized running time
  bounded by $\bigO{\log{n}^4/\eps}$ instances of single-commodity
  flow over graphs of $\bigO{m}$ edges and polynomially bounded
  capacities. By \cite{CKLPGS22}, this is at most $m^{1+o(1)} / \eps$
  randomized time.
\end{theorem}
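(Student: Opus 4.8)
The plan is to reduce $\epsmore$-approximate global edge connectivity to a small number of rooted edge connectivity computations, and then to solve the rooted problem via the "shrink-wrapping" divide-and-conquer described in the abstract. For the global-to-rooted reduction, recall that the minimum global edge cut separates some vertex from some other vertex in one direction; fixing any single vertex $v$, the minimum global cut is realized either as a minimum $r$-cut with root $r = v$ in $G$, or as a minimum $r$-cut with root $r = v$ in the reversal $\comp{G}$ (reversing all edges). Hence two calls to a rooted-connectivity subroutine — one in $G$, one in $\comp{G}$ — suffice, losing nothing in approximation and only a factor $2$ in the flow count.

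**Next I would** set up the rooted problem. Given root $r$, I want a $\epsmore$-approximate minimum $r$-cut. The natural first move is a standard reduction from the rooted \emph{global} cut (separate $r$ from \emph{some} vertex) to a rooted \emph{Steiner} connectivity problem where we must certify, for every terminal $t$ in a terminal set $T$, that the minimum $(r,t)$-cut is at least the current guess $\lambda$ — or else exhibit a small $r$-cut. To make the recursion well-conditioned, I expect we first compute a constant-factor (or $O(\log n)$-factor) estimate $\tilde\lambda$ of the rooted connectivity by a cheap procedure (e.g., a logarithmic number of flows from $r$ to a random-like probe, or a sparsification-based bound), so that throughout the divide-and-conquer all capacities and cut values we care about lie in a bounded range around $\tilde\lambda$; this is the "well-conditioned" hypothesis the shrink-wrapping subroutine needs.

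**The core recursion** runs shrink-wrapping on $(r, T)$: it performs $\poly\log n$ single-commodity flows to (a) certify connectivity $\ge (1-\eps)\lambda$ from $r$ to a subset of terminals, and (b) produce an $r$-cut whose sink side $S$ contains the sink component of the minimum $(r,t)$-cut for every still-uncertified $t$, with $\sizeof{S} = O(\text{number of uncertified terminals})$. We then recurse: contract everything outside a slightly enlarged $S$ toward $r$ (so the contracted graph has $O(\sizeof{S})$ vertices but only $O(m)$ edges after the contraction, or we pass to an appropriately sparsified instance), split the uncertified terminals, and recurse on each contracted subgraph with the inherited root. Because the number of uncertified terminals at least halves (or the instance size shrinks geometrically) at each level, the recursion has depth $O(\log n)$, and with $O(\log^2 n / \eps)$ or so flows per level — accounting for the $1/\eps$ needed to drive the multiplicative gap down and an extra $\log n$ for the probing/certification inside each shrink-wrap call — we reach the claimed $\bigO{\log^4 n / \eps}$ total flows. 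Take the minimum over all $r$-cuts produced across the recursion tree, and over the two runs ($G$ and $\comp{G}$); standard boosting of the randomized subroutine by $O(\log n)$ repetitions gives the high-probability guarantee. Finally, each flow is on a graph with $\bigO{m}$ edges and polynomially bounded capacities (capacities are preserved up to the polynomial range by the conditioning step), so plugging in \cite{CKLPGS22} yields the $m^{1+o(1)}/\eps$ bound.

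**The main obstacle** I anticipate is establishing property (b) of shrink-wrapping with the right size bound: we need a \emph{single} $r$-cut whose sink side simultaneously dominates the sink components of all the uncertified terminals' minimum cuts, \emph{and} has size only $O(\#\text{uncertified})$ rather than $O(n)$. The natural candidate — the union of those sink components — has the domination property for free but no size bound; conversely a cut from a single aggregated sink has a size bound but may fail to contain some terminal's sink component. Reconciling these almost certainly requires a clever flow/cut argument (likely routing $\Theta(\lambda)$ units from $r$ to an artificial super-sink fed by the uncertified terminals and using the resulting residual structure, together with submodularity/uncrossing of the minimum-cut lattice) to carve out exactly the right set $S$; getting both the containment and the linear-in-$\#$terminals size simultaneously — and doing so with only $\poly\log n$ flows — is the crux. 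Secondary but real difficulties are controlling capacity blow-up through repeated contractions so the "polynomially bounded capacities" clause survives, and bounding the recursion so that the $O(m)$-edge hypothesis holds at every node (which may need a light sparsification step that is itself correct for approximate rooted cuts).
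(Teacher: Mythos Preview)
Your high-level outline is in the right neighborhood, but it misses the one idea that actually makes the size bound in shrink-wrapping go through: \emph{preconditioning by adding auxiliary arcs from $r$}. You interpret ``well-conditioned'' as a statement about capacities lying in a bounded range around a guess $\tilde\lambda$; that is not what the paper means. The paper calls $(G,r)$ rooted $\varphi$-conditioned when every $r$-cut $\incut{U}$ satisfies $\incutsize{U}\ge\varphi\,\involume{U}$, and it \emph{forces} this by adding, for every $v$, an arc $(r,v)$ of capacity $\Theta(\eps\lambda\,\indegree{v}/\mu)$ (where $\mu$ is a guess at the in-volume of the optimal sink side). This is precisely where the $\eps$ error enters, and it is what turns the max-flow/min-cut inequality $\incutsize{B}\le\lambda\sizeof{T\cap B}$ into the bound $\involume{B}\le\lambda\sizeof{T\cap B}/\varphi$, i.e., the contracted graph $G/A$ has edge count linear in the number of uncertified terminals. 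Your ``main obstacle'' paragraph correctly locates the difficulty but speculates that a clever uncrossing or residual-graph argument will carve out a small $S$; no such argument exists in the unmodified graph (the sink side of the aggregate cut can be essentially all of $V$), and the paper's resolution is to change the graph, not to be cleverer about the cut.

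Two secondary points are also missing. First, the paper does not start from an arbitrary terminal set $T$: it guesses the in-volume $\mu$ of the optimal sink component (ranging over $O(\log n)$ powers of $2$) and then samples each vertex into $T$ with probability $\Theta(\log n\cdot\indegree{v}/\mu)$, so that $\sizeof{T}\approx m\log n/\mu$ and the preconditioning parameter $\varphi=\Theta(\eps\lambda/\mu)$ balances against it. Second, in the recursion the terminal set is halved \emph{before} computing the $(\lambda,T_i)$-flow, not after; this is what makes the two children's contracted graphs each have at most half the edges, giving the geometric decrease you want. There is no sparsification step anywhere. With these corrections your accounting becomes: $O(\log n)$ for binary search on $\lambda$, $O(\log n)$ for guessing $\mu$, and $O(\log^2 n/\eps)$ flows inside shrink-wrap (depth $\log k$ times the $\lambda k/(\varphi m)=O(\log n/\eps)$ overhead from the terminal/edge ratio), matching the claimed $O(\log^4 n/\eps)$.
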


For directed vertex connectivity, we prove the following.
\begin{theorem}
  \labeltheorem{apx-vc}
  A $\epsmore$-minimum global (respectively, rooted) vertex cut can be
  computed with high probability in running time bounded by
  $\bigO{\log{n}^5 / \eps}$ (respectively, $\bigO{\log{n}^4/\eps}$)
  instances of single-commodity flow over graphs of $\bigO{m}$ edges
  and polynomially bounded capacities. By \cite{CKLPGS22}, this is at
  most $m^{1+o(1)} / \eps$ randomized time.
\end{theorem}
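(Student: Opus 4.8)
The plan is to reduce both parts to the edge-connectivity theorem above through the standard vertex-splitting gadget, and then to reduce global vertex connectivity to a logarithmic number of rooted computations. For the rooted part, let $W := \sum_{v \in V} w(v)$ (polynomially bounded, since the weights are). Form the split graph $\hat G$ on the vertex set $\{\XXin{v},\XXout{v} : v \in V\}$: replace each vertex $v$ by an arc $(\XXin{v},\XXout{v})$ of capacity $w(v)$, and replace each arc $(u,v)$ of $G$ by an arc $(\XXout{u},\XXin{v})$ of capacity $M := 1 + W$. A rooted vertex cut of $G$ from a root $r$ corresponds exactly to a rooted edge cut of $\hat G$ from $\XXout{r}$ that uses no $M$-capacity arc and not the arc $(\XXin{r},\XXout{r})$. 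Any such cut is a set of vertex arcs of total capacity at most $W$; conversely, whenever $r$ does not dominate $V$ there is a rooted vertex cut of $G$ from $r$ of weight at most $W$, so the minimum rooted edge cut of $\hat G$ from $\XXout{r}$ has weight at most $W < M$, hence uses no $M$-arc and is exactly a minimum rooted vertex cut of $G$. Since $\hat G$ has $\bigO{m}$ edges and polynomially bounded capacities, running the edge-connectivity theorem on $\hat G$ returns a $\epsmore$-minimum rooted vertex cut in $\bigO{\log{n}^4/\eps}$ flows, which proves the rooted half.

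For the global part, fix an (unknown) partition $(\optS, U^{\star}, \optT)$ realizing the minimum global vertex cut $U^{\star}$, so no arc runs from $\optS$ to $\optT$. For any $s \in \optS$ and $t \in \optT$, every walk from $s$ to $t$ in $G$ meets $U^{\star}$ (it must leave $\optS$, and every arc leaving $\optS$ enters $U^{\star}$), so $U^{\star}$ is a rooted cut from $s$; hence the rooted vertex connectivity of $G$ from $s$ is at most $w(U^{\star}) = \kappa(G)$, and it is at least $\kappa(G)$ because every rooted cut is a global cut. So a single good root suffices — one lying in $\optS$ run on $G$, or symmetrically one lying in $\optT$ run on the arc-reversed graph — and, because the rooted procedure only ever returns a valid global cut, the algorithm can try $\bigO{\log{n}}$ candidate roots in $G$ and in the reversed graph and output the lightest cut obtained; this is $\epsmore$-optimal with high probability provided the candidates contain a good root with high probability. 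When $w(\optS) + w(\optT) = \Omega(W)$ this is immediate from sampling vertices with probability proportional to their weight, and the degenerate cases $\sizeof{\optS} = 1$ or $\sizeof{\optT} = 1$ are handled with no flow at all: there $U^{\star}$ is forced to be $\outneighbors{s}$ or $\inneighbors{t}$, while $\outneighbors{v}$ (resp.\ $\inneighbors{v}$) is itself a global cut for every $v$ with $\{v\}\cup\outneighbors{v} \ne V$ (resp.\ $\{v\}\cup\inneighbors{v}\ne V$), so $\kappa(G)$ equals $\min_v w(\outneighbors{v})$ or $\min_v w(\inneighbors{v})$ in that regime.

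The main obstacle I expect is the remaining regime, where $U^{\star}$ carries almost all of the vertex weight and both $\optS$ and $\optT$ are light and tiny, so that no single sampling scale meets $\optS \cup \optT$ with constant probability. The natural remedy is to sample at $\bigO{\log{n}}$ geometric scales for $\min(\sizeof{\optS},\sizeof{\optT})$ — or, in a way that meshes with the shrink-wrapping recursion, to recursively divide and conquer over candidate roots while contracting down to smaller subgraphs — so that across all scales a good root is used with high probability; the delicate points are confining the overhead to a single extra logarithmic factor and verifying that these contractions preserve the $\epsmore$ guarantee and do not disturb the $M$-capacity arcs of $\hat G$. Each scale costs one rooted computation, i.e.\ $\bigO{\log{n}^4/\eps}$ flows, for a total of $\bigO{\log{n}^5/\eps}$ flows on graphs with $\bigO{m}$ edges and polynomially bounded capacities; by \cite{CKLPGS22} this is $m^{1+o(1)}/\eps$ randomized time.
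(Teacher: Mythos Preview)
Your rooted reduction via the split graph is essentially the paper's approach and is correct, modulo two small technical points you should tighten: set $M$ large enough that $(1+\eps)\optlambda < M$ (e.g.\ $M = 2W$ rather than $1+W$), and spell out how to extract a vertex cut from an arbitrary $(1+\eps)$-approximate edge cut in $\hat G$, since the latter need not consist solely of vertex arcs. The paper does not black-box the edge theorem but instead re-runs the shrink-wrap analysis directly on the preconditioned split graph; your black-box route is shorter and achieves the same bound.

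For the global reduction there is a genuine gap. You correctly isolate the hard regime (the cut $U^{\star}$ carries almost all the weight, so weight-proportional sampling misses $\optS \cup \optT$), but your proposed remedy does not close it. ``Sampling at $\bigO{\log n}$ geometric scales for $\min(|\optS|,|\optT|)$, one rooted computation per scale'' cannot work: at the scale matching $|\optS|$ you would need roughly $n/|\optS|$ candidate roots to hit $\optS$, not one, and nothing in the shrink-wrap recursion lets you amortize over unknown candidate roots the way it amortizes over terminals. Your singleton case $|\optS|=1$ is fine but does not extend.

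The paper's fix is quite different and worth knowing. First, the naive bound: with $\ell = \bigO{\log{n}/\eps}$ weight-proportional roots one \emph{does} succeed with high probability, because either $\capacity{V} \le (1+\eps)\optlambda$ (and any singleton cut is already $(1+\eps)$-optimal) or $(\optlambda/\capacity{V})^{\ell} \le (1+\eps)^{-\ell}$ is polynomially small. This already gives $\bigO{\log{n}^5/\eps^2}$ flows. To shave the extra $1/\eps$, the paper invokes an argument of \cite{JMY25}: let $\delta = \min_v \min\{\capacity{\inneighbors{v}},\capacity{\outneighbors{v}}\}$ and take $\ell = \min\{\bigO{\log{n}/\eps},\ \bigO{\capacity{V}\log{n}/(\capacity{V}-\delta)}\}$. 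Since $\optlambda \le \delta$, the failure probability is still polynomially small. The key new observation is that for a sampled root $r$ one may delete all incoming arcs to every out-neighbor of $r$ (except the arc from $r$) without changing the rooted connectivity, and a short calculation shows the \emph{expected total} number of surviving edges across all $\ell$ roots is $\bigO{m\log n}$. Batching these pruned instances into $\bigO{m}$-edge groups yields $\bigO{\log n}$ rooted computations' worth of work, hence $\bigO{\log{n}^5/\eps}$ flows overall.
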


These fast approximation algorithms are attractive in situations where
the graph is so large that one is willing to exchange polynomial
factors in the input size for a controlled amount of error. Based on
the history of developments for undirected minimum cut
\cite{Karger93,Karger00}, we also hope they are a stepping stone
towards nearly linear time exact algorithms for directed minimum cut.

\paragraph{Exact algorithms for small connectivities.}
As mentioned above, there is interest in faster exact algorithms in
integer-capacitated graphs with small connectivity. (These are the
algorithms with running times parameterized by the connectivity
$\lambda$.) By taking $\eps$ inversely proportional to the value of
the minimum vertex cut, the $\epsmore$-approximate vertex cut
algorithm of \reftheorem{apx-vc} becomes a state-of-the-art exact
algorithm for the small vertex connectivity setting. Additional minor
adjustments improve the logarithmic factors to the following.

\begin{theorem}
  \label{small-vc}
  In integer capacitated graphs, the minimum global (respectively,
  rooted) vertex cut can be computed with high probability in running
  time bounded by $\bigO{\lambda \log \lambda \log^4 n}$
  (respectively, $\bigO{\lambda \log \lambda \log^3 n}$) instances
  of single-commodity flow over graphs of $\bigO{m}$ edges and
  polynomially bounded capacities, where $\lambda$ is the global
  (respectively, rooted) vertex connectivity. By \cite{CKLPGS22}, this is at
  most $\lambda m^{1+o(1)}$ randomized time.
\end{theorem}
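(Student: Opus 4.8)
The plan is to obtain \cref{small-vc} from \reftheorem{apx-vc} by driving $\eps$ small enough to force exactness, combined with an outer doubling search on the unknown connectivity $\lambda$. In an integer-capacitated graph every vertex cut has integer weight, so the minimum weight $\lambda$ is an integer; running the $(1+\eps)$-approximation with $\eps < 1/\lambda$ returns a cut of weight $w$ with $\lambda \le w \le (1+\eps)\lambda < \lambda + 1$, hence $w = \lambda$ and the returned cut is exactly optimal. (We first test strong connectivity with a single reachability computation, disposing of the case $\lambda = 0$.) Since $\lambda$ is not known in advance, I would run the approximation for $i = 0, 1, 2, \ldots$ with guess $2^i$ and error parameter $\eps_i = 2^{-(i+1)}$, stopping at the first $i$ whose returned cut has weight at most $2^i$. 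Every returned cut has weight at least $\lambda$, so the search never stops while $2^i < \lambda$; and once $2^i \ge \lambda$ we have $\eps_i < 1/\lambda$, so the returned weight is exactly $\lambda \le 2^i$ and we stop with the correct answer. Thus the search halts at $i^\star = \lceil \log_2 \lambda \rceil$, and summing the per-run costs $\bigO{\log^5 n / \eps_i} = \bigO{2^i \log^5 n}$ over $i \le i^\star$ gives $\bigO{\lambda \log^5 n}$ flow instances for the global problem and $\bigO{\lambda \log^4 n}$ for the rooted problem; a union bound over the $\bigO{\log \lambda}$ runs keeps the guarantee high-probability.

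This already yields \cref{small-vc} up to one extra logarithmic factor, and recovering the stated $\bigO{\lambda \log\lambda \log^4 n}$ and $\bigO{\lambda \log\lambda \log^3 n}$ is the remaining ``minor adjustment.'' The key observation is that exactly one of the $\log n$ factors in \reftheorem{apx-vc} can be traced to a search over the scale of the optimal cut: a loop over the $\bigO{\log n}$ candidate ranges $[2^j, 2^{j+1})$ for the cut value, together with the sampling and sparsification parameters that grow with $\log(\text{cut value})$. Once we have committed to $\lambda \le 2^i$, this loop has only $\bigO{i}$ relevant iterations instead of $\bigO{\log n}$, so run $i$ of the outer search costs $\bigO{i \cdot 2^i \cdot \log^4 n}$ flows for the global problem (and $\bigO{i \cdot 2^i \cdot \log^3 n}$ for the rooted one). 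Since $\sum_{i \le i^\star} i \, 2^i = \bigO{i^\star 2^{i^\star}} = \bigO{\lambda \log \lambda}$, summing over the search gives the claimed bounds, and the conversion to $\lambda m^{1+o(1)}$ randomized time follows from \cite{CKLPGS22}.

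The bookkeeping above is routine; the real work — and where I expect the difficulty to lie — is verifying the structural claim that this scale loop is the \emph{only} $\log n$ factor in the algorithm behind \reftheorem{apx-vc} that depends on the cut value, and that truncating it at $2^i$ is sound. Concretely, one must check that the contracted subinstances produced by shrink-wrapping never need to certify or sparsify against cuts of value larger than the current guess, that the reduction from global to rooted vertex connectivity introduces no further value-dependent $\log n$, and that the ``guess too small'' test is clean — which it is, since the approximation algorithm always returns a genuine vertex cut, whose weight lower-bounds $\lambda$, so the guess $2^i$ is too small precisely when the returned weight exceeds $2^i$. Crucially, none of this modifies the algorithm of \reftheorem{apx-vc}: it is reused verbatim, merely restricted to a shorter range of scales.
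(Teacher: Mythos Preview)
Your proposal is correct and matches the paper's approach: set $\eps$ just below $1/\lambda$ so that a $(1+\eps)$-approximate cut is exact in the integer case, and replace the $\bigO{\log n}$ outer binary search over the cut value by a search confined to $[1,\bigO{\lambda}]$, trading that $\log n$ factor for $\log \lambda$. The paper does this slightly more directly---it modifies the single binary search inside the approximation algorithm to first double up to some $\lambda \geq \optlambda$ and then search within $[1,\lambda]$, setting $\eps = 1/(1+\lambda)$ per probe---whereas you wrap the whole approximation in an outer doubling loop and truncate the inner search; the nesting is mildly redundant but the geometric sum absorbs it and the asymptotics are identical. Your vague reference to ``sampling and sparsification parameters that grow with $\log(\text{cut value})$'' is off---the only value-dependent $\log n$ is the binary search over $\lambda$ itself, and the remaining $\log n$ factors (the $\mu$ loop, the shrink-wrap recursion depth, the sampling for high probability, and the global-to-rooted reduction) are genuinely tied to $n$---but since you only need one such factor, the argument goes through.
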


For edge connectivity, the same idea gives an exact algorithm for
integer capacitated graphs that is within a subpolynomial factor of
the $\bigO{\lambda m \log{n^2 / m}}$ running time of
\cite{Gabow1995a}. It is structurally interesting to learn that an
exact minimum directed cut can be computed with high probability in
$\bigO{\lambda \polylog{n}}$ flows over the input graph.

\section{High-level overview and techniques}

Consider the rooted edge connectivity problem, where we have a fixed root
$r$ and want to find the minimum $r$-cut.  Obviously, the minimum
$r$-cut can be computed in $(n-1)$ (single-commodity) max-flow min-cut
computations, taking the minimum $(r,t)$-cut for every choice of
$t \in V - r$. Now suppose the sink-side of the minimum $r$-cut has
$k$ vertices. Then one can compute the minimum $(r,t)$-cut for just
$\apxO{n / k}$ uniformly sampled sink vertices $t$ and obtain the
minimum $r$-cut with high probability. This works well when $k$ is
large and can be balanced against other strategies when $k$ is
small. For example, when edge weights are integral and the cut value
$\lambda$ is small, local flow algorithms give $\poly{k,\lambda}$
subroutines that can be applied to each sampled sink $t$
\cite{nsy-19,fnsyy}. Alternatively, in simple graphs,
\cite{cq-simple-connectivity} observed that vertices with in-degree at
least $2k$ can be contracted into the root while preserving the
minimum $r$-cut, leaving a sparse graph with $\bigO{n k}$
edges. Balancing the size of the sink component against the sparsity of
the contracted graph leads to a nearly $n^2$ running time in simple graphs.

A third approach for undirected minimum cut is based on the symmetric
submodularity of undirected cuts. Suppose $G$ is undirected and the
smaller side of the minimum cut has $k$ vertices. Let $T \subseteq V$
sample $n /k$ vertices uniformly at random; with constant probability,
$T$ has exactly one vertex on the smaller side of the minimum cut. In
particular, the minimum cut is now an \emph{isolating cut} for $T$, in
that it isolates some $t \in T$ from $T - t$. The minimum isolating
cut for $T$ can be solved in $\bigO{\log n}$ single-commodity flows
and cuts and the key observation is that for any partition $(A,B)$ of
$T$, the minimum $(A,B)$-cut does not cross any minimum isolating cut
\cite{LP20}.

For rooted connectivity, the isolating-cut approach is very attractive
as it only requires $\bigO{\log n}$ maximum flows, independently of
the size of the sink component. It is also easy to sample a set of
vertices $T$ with exactly one vertex from the sink component of the
minimum $r$-cut. But symmetry plays a critical role in the isolating
cuts argument: the $(A,B)$-cuts partition the graph while preserving
the isolating cuts on \emph{both} sides, leading to a
divide-and-conquer dynamic that ultimately partitions the graph, with
each part containing a distinct isolating cut. In directed graphs, the
minimum $(A,B)$-cut has no relation to the minimum $(B,A)$-cut, and it
is hard to see how to use this technique to decompose the graph like
so.

All that said, fix a cut value $\lambda > 0$ and a set of terminals
$T \subseteq V - r$.  Suppose we want to find some terminal $t \in T$
with $r$-connectivity less than $\lambda$, or otherwise certify that
all $t \in T$ have $r$-connectivity (at least) $\lambda$. To try to do
this in a single flow computation, one can create a flow network that
tries to route $\lambda$ units of flow from $r$ to every $t \in
T$. (Here one introduces a super-sink $\supersink$ and arcs
$(t,\supersink)$ of capacity $\lambda$ for all terminals $t \in T$.)

\begin{center}
  \includegraphics{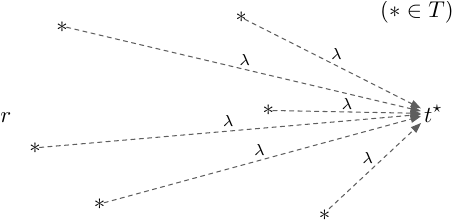}
\end{center}

The maximum flow in this network may saturate (i.e., route $\lambda$
units of flow to) some terminals $t \in T$, certifying the
$r$-connectivity to be at least $\lambda$ for each of these terminals.
What about the remaining, unsaturated terminals? Letting $(A,B+\supersink)$
denote the minimum cut in this flow network, these are the terminals
in $T \cap B$. \Reflemma{wrap} below proves the following uncrossing
claim: for each unsaturated terminal $t \in T \cap B$, there is a
minimum $(r,t)$-cut where the sink component is contained in $B$. The
proof ultimately comes down to submodularity of cuts although the
application is indirect.  \Reflemma{wrap} implies that one can take
the graph contracting $A$ into the root, $G/A$, and continue trying to
certify the $r$-connectivity of the remaining terminals $T \cap B$ in
$G/A$.

\begin{center}
  \includegraphics{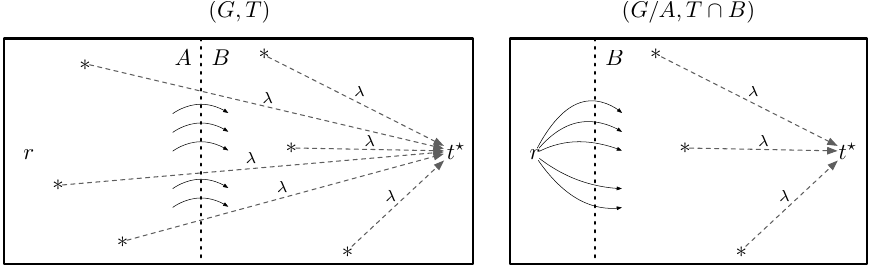}
\end{center}

Suppose we attempted the same kind of flow problem with $G / A$ and
$T\cap B$, trying to route $\lambda$ units of flow from $r$ to each
$t \in T\cap B$. Then, of course, one gets the same cut as before,
induced by $B$. One way to make progress is to halve $T \cap B$ into
two terminal sets $T_1$ and $T_2$, and consider the corresponding flow
problems for each subset of terminals separately. These smaller flow problems
generally produce different, refined cuts. We can continue to
divide-and-conquer over the terminals in this fashion until in the
base case we have one terminal $t$ that can be addressed with an
$(r,t)$-flow.

Alas, the running time of this divide-and-conquer algorithm is still
$m^{1+o(1)} \sizeof{T}$, no better than computing the minimum
$(r,t)$-cut directly for every $t \in T$. The problem is that the
contracted graphs $G / A$ may not be significantly smaller than
$G$. This deficiency motivates the second key idea of
\emph{preconditioning} the graph.

For each vertex $v \in V$, let $\indegree{v}$ denoted the unweighted
in-degree of $v$. For a set of vertices $X \subseteq V$, we define the
\defterm{in-volume} of $X$, denoted $\involume{X}$, as the sum of
unweighted in-degrees over all vertices in $X$,
$\involume{X} = \sum_{v \in X} \indegree{v}$.  Let $\mu$ be the
in-volume of the minimum $r$-cut.  For a parameter $\varphi$, we say
that $(G,r)$ is \defterm{rooted $\varphi$-conditioned} if every
$r$-cut $\incut{X}$, where $X \subseteq V - r$, has capacity at
least $\varphi \involume{S}$. For
$\varphi = \bigOmega{\epsilon \lambda / \mu}$, one can make $(G,r)$
$\varphi$-conditioned by adding, for every vertex $v \in V - r$, an
arc ($r,v$) of weight $\epsilon \lambda \indegree{v} / \mu$, while
increasing the size of the minimum $r$-cut by at most
$\epsilon \lambda$.

(A similar preconditioning step, adding arcs from $r$ to every other
vertex, appears in \cite{CLNPQS21}. The auxiliary arcs in
\cite{CLNPQS21} are uniformly weighted and are designed to facilitate
a randomized edge sparsification argument.)

Now, suppose $(G,r)$ is $\varphi$-conditioned for
$\varphi = \bigOmega{\epsilon \lambda / \mu}$, and let $T$ sample each
non-root vertex $v$ independently with probability
$\indegree{v} / \mu$. $T$ has $m / \mu$ vertices in expectation, and
at least one vertex from the sink component of the minimum $r$-cut
with constant probability. Consider the same flow network as before,
trying to route $\lambda$ units of flow to every $t \in T$. As before,
let $(A,B)$ be the minimum cut, and consider the unsaturated terminals
$T \cap B$ in the contracted graph $G / A$. The max-flow min-cut
theorem implies that the size of the cut, $\incutsize{B}$, is at most
the total sink-capacity of $B$, $\lambda \sizeof{T \cap B}$.  Since
$G$ is $\varphi$-conditioned,
$\incutsize{B} \geq \varphi \involume{B}$. But $\involume{B}$ is also
the number of edges in $G / A$! Putting everything together, the
number of edges in $G / A$ is at most
$\bigO{\mu \sizeof{T \cap B} / \epsilon}$.

The key point is that the number of edges in the contracted graph,
$G / A$, is proportional to the number of terminals $T \cap B$. Now
when we halve the unsaturated terminals $T \cap B$ into $T_1$ and
$T_2$, we also halve the number of edges in the graph. This dynamic
leads to a running time bounded by logarithmically many max-flows.

We call this the ``shrink-wrap'' algorithm. The auxiliary flow problem
produces a cut $(A,B)$ where $B$ ``wraps'' the minimum $(r,t)$-cut for
every unsaturated terminal $t$. Because $G$ is preconditioned,
contracting $A$ ``shrinks'' the graph, keeping the input size
proportional to the number of remaining terminals. Combined with
standard techniques, like guessing $\mu$ up to a constant factor in
parallel, and binary-searching for $\lambda$, we obtain an
almost-linear time approximation algorithm for minimum rooted cut.
Exactness was compromised in the preconditioning step when we added
the auxiliary arcs. The whole algorithm is arguably simple, in
hindsight.

The shrink-wrap algorithm for approximate edge connectivity is
analyzed formally in \refsection{edge-connectivity}. The basic ideas
extend to rooted vertex connectivity, with mostly superficial
differences. Compared to edge connectivity, more effort is required to
reduce global vertex connectivity to rooted vertex connectivity. The
vertex connectivity algorithms are described and analyzed in
\refsection{vertex-connectivity}.

\section{Edge connectivity}

\labelsection{edge-connectivity}

Let $G = (V,E)$ be a directed graph with edge capacities or weights
$\capacity: E \to \reals$, let $r \in V$ be a fixed root vertex, and
let $\eps \in (0,1)$ be a fixed error parameter.  Our direct aim in
this section is to compute a $\epsmore$-minimum weight $r$-cut with
high probability. A global $\epsmore$-minimum weight cut can then be
computed as either the minimum $r$-cut in $G$ or the minimum $r$-cut
in the graph reversing all the edges in $G$.

This section is exclusively concerned with edge connectivity. Cuts
will always mean edge cuts, connectivity with always mean edge
connectivity, and so forth.  For a set of vertices $S$, we let
$\incut{S}$ denote the cut of arcs going into $S$. For a set of edges
$F$, we abbreviate their total capacity by
$\capacity{F}= \sum_{e \in F} \capacity{e}$. For example,
$\incutsize{S}$ is the total capacity of the directed cut with sink
component $S$.

The high-level approach reduces approximate rooted connectivity to
logarithmically many ``well-conditioned'' ``rooted Steiner
connectivity'' problems.  The \defterm{rooted Steiner connectivity
  problem} takes as input a directed graph $G = (V,E)$, a root vertex
$r \in V$, a set of terminals $T \subseteq V - r$, and a connectivity
parameter $\lambda$. The problem is to decide, for all $t \in T$, if
the $(r,v)$-connectivity is at least $\lambda$ or not. The former can
be certified by an $(r,v)$-flow (or pre-flow) where $v$ receives
$\lambda$ units of flow; the latter can be certified by an $(r,v)$-cut
of capacity less than $\lambda$.

\subsection{Uncrossing rooted cuts}

Consider an instance of rooted Steiner connectivity, consisting of a
directed graph $G$, a set of $k$ terminals $T \subseteq V - r$, and a
connectivity parameter $\lambda > 0$. We define the
\defterm{$(\lambda,T)$-flow problem} as the flow problem with source
$r$ and an extending $G$ with an auxiliary sink vertex $\supersink$,
and arcs $(r,t)$ with capacity $\lambda$ for every terminal $t \in
T$. Intuitively, the $(\lambda,T)$-flow problem tries to
simultaneously route $\lambda$ units of flow from $r$ to every
terminal $t \in T$. The maximum $(r,\supersink)$-flow and minimum
$(r,\supersink)$-cut in this network have the following nice
properties.

\begin{lemma}
  \labellemma{wrap}
  Let $(A,B+\supersink)$ be the minimum $(r,\supersink)$-cut in the
  $(\lambda,T)$-flow problem.
  \begin{mathproperties}
  \item For all $t \in T \cap A$, the $(r,t)$-connectivity is at least
    $\lambda$.
  \item \label{wrap-property} For all $t \in T \cap B$, there is a
    minimum $(r,t)$-cut where the sink component is contained in $B$.
  \end{mathproperties}
\end{lemma}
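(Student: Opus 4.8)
The plan is to deduce both properties from a single uncrossing step. The two ingredients are submodularity of the directed in-cut function $X \mapsto \incutsize{X}$ on subsets of $V$ (standard, and checkable arc by arc) and the minimality of $(A, B + \supersink)$ among all $(r,\supersink)$-cuts of the $(\lambda,T)$-flow network. The computation I would record up front is that, for any $X \subseteq V - r$, the $(r,\supersink)$-cut with source side $V \setminus X$ has capacity $\incutsize{X} + \lambda\,\sizeof{T \setminus X}$: the arcs leaving the source side are exactly the original arcs into $X$, of total capacity $\incutsize{X}$, together with the auxiliary arcs $(t,\supersink)$ for the terminals $t \notin X$, each of capacity $\lambda$. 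In particular the minimum cut $(A, B+\supersink)$ has capacity $\incutsize{B} + \lambda\,\sizeof{T \cap A}$.

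Now fix a terminal $t \in T$ and let $X \subseteq V - r$ be the sink component of a minimum $(r,t)$-cut in $G$, so $t \in X$ and $\incutsize{X}$ equals the $(r,t)$-connectivity. I would uncross $X$ against $B$: submodularity gives $\incutsize{X} + \incutsize{B} \ge \incutsize{X \cup B} + \incutsize{X \cap B}$. Since $r \in A$ and $r \notin X$, the set $A \setminus X$ contains $r$, so $(A \setminus X,\, (X \cup B) + \supersink)$ is an $(r,\supersink)$-cut, of capacity $\incutsize{X \cup B} + \lambda\,\sizeof{T \setminus (X \cup B)}$. Comparing this with the minimum cut and using $\sizeof{T \cap A} - \sizeof{T \setminus (X \cup B)} = \sizeof{T \cap A \cap X}$, minimality gives $\incutsize{X \cup B} \ge \incutsize{B} + \lambda\,\sizeof{T \cap A \cap X}$. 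Plugging this into the submodularity inequality and cancelling $\incutsize{B}$ yields
\[
  \incutsize{X} \;\ge\; \lambda\,\sizeof{T \cap A \cap X} \;+\; \incutsize{X \cap B}.
\]

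Both properties follow from this inequality. For the first, if $t \in T \cap A$ then $t \in T \cap A \cap X$, so the right-hand side is at least $\lambda$, and hence the $(r,t)$-connectivity $\incutsize{X} \ge \lambda$. For the second, if $t \in T \cap B$ then $X \cap B$ contains $t$ and excludes $r$, so it is the sink component of some $(r,t)$-cut; discarding the nonnegative term $\lambda\,\sizeof{T \cap A \cap X}$ gives $\incutsize{X \cap B} \le \incutsize{X}$, which, as $X$ is a minimum $(r,t)$-cut, forces equality. Thus $X \cap B \subseteq B$ is itself a minimum $(r,t)$-cut, which is the claimed statement.

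I expect the only delicate point to be the bookkeeping of the auxiliary arcs when comparing the two cut capacities — specifically, confirming that moving from source side $A$ to source side $A \setminus X$ drops precisely the $\lambda$-weight arcs of the terminals in $T \cap A \cap X$. Submodularity of the in-cut function and the max-flow/min-cut description of $(r,t)$-connectivity are standard, and the argument itself invokes no flow computation beyond these facts.
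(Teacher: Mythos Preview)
Your proof is correct. Both you and the paper use submodularity of the in-cut function $X \mapsto \incutsize{X}$ to uncross the minimum $(r,t)$-cut against $B$, but the routes differ in an interesting way. The paper proves (i) directly from the maximum $(r,\supersink)$-flow $f$ (each arc $(t,\supersink)$ with $t \in A$ is saturated, so $f$ delivers $\lambda$ units to $t$), and for (ii) it first establishes the intermediate claim that $(A,B)$ is a minimum $(r,T\cap B)$-cut in $G$ by a flow-manipulation argument --- pushing the excess at each $t \in T \cap A$ back to $r$ --- and then applies submodularity against that cut. Your argument stays entirely on the cut side: you compare the $(r,\supersink)$-cut $(A,B+\supersink)$ against $(A\setminus X,(X\cup B)+\supersink)$ directly in the augmented network, and the $\lambda\,\sizeof{T\cap A\cap X}$ term you extract does double duty, yielding (i) and (ii) from a single inequality. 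This is more uniform and avoids touching the flow altogether; the paper's approach, in exchange, surfaces the fact that $(A,B)$ is a minimum $(r,T\cap B)$-cut, which is a pleasant structural observation even if it is not used elsewhere.
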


\begin{proof}
  Let $f$ be a maximum $(r,\supersink)$-flow.  For $t \in T \cap A$,
  since $(t, \supersink)$ is saturated, $f$ carries at least $\lambda$
  units of flow from $r$ to $t$. This certifies that the
  $(r,t)$-connectivity is at least $\lambda$.

  Before addressing \cref{wrap-property}, we claim that $(A,B)$ is a
  minimum $(r,T \cap B)$-cut. To prove this, we will extract from $f$
  a maximum $(r, T \cap B)$-flow with the same capacity as $(A,B)$.

  Recall that a \defterm{preflow} is a relaxation of a flow where
  conservation of flow is relaxed to allow non-terminal vertices to
  have positive excess flow. If we restrict $f$ to $G$, then we obtain
  a $(r,T \cap B)$-preflow $f_{G}$ where each previously saturated
  terminal $t \in T \cap A$ has a surplus of $\lambda$ units of flow.
  Let $h$ be the $(r,T \cap B)$-flow on $G$ obtained from this preflow
  $f_G$ by pushing back $\lambda$ units of flow from each
  $t \in T \cap A$ back into $r$.

  \begin{center}
    \includegraphics{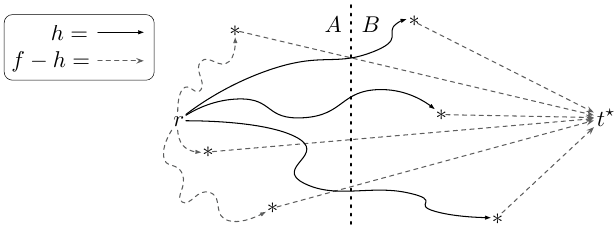}
  \end{center}

  Since $f$ is a maximum $(r,\supersink)$-flow and $(A,B+\supersink)$
  is a minimum $(r,\supersink)$-cut, $f$ saturates the $(A,B)$-cut and
  carries no flow along the reverse $(B,A)$-cut. Consequently, when we
  push back flow from some $t \in T \cap A$ back to $r$, none of the
  flow being removed goes through the $(A,B)$-cut. Thus $h$ also
  saturates the $(A,B)$-cut. As a sub-flow of $f$, $h$ also does not
  carry any flow along the reverse $(B,A)$-cut.

  Thus $h$ is an $(r, T \cap B)$-flow of size equal to the capacity of
  the $(A,B)$-cut. By duality of flows and cuts, $h$ and $(A,B)$
  mutually certify each other to be the maximum $(r,T \cap B)$-flow
  and minimum $(r,T \cap B)$-cut, respectively.

  Returning to \cref{wrap-property}, let $t \in T \cap B$. Let $U$ be
  the sink-component of a minimum $(r,t)$-cut.  By submodularity of
  cuts, we have
  \begin{align*}
    \incutsize{B} + \incutsize{U} \geq \incutsize{B \cup U} +
    \incutsize{B \cap U}.
  \end{align*}
  Since $(A \setminus U, B \cup U)$ is an $(r,T \cap B)$-cut, and
  $(A,B)$ is a minimum $(r, T \cap B)$-cut,
  \begin{align*}
    \incutsize{B} \leq \incutsize{B \cup U}.
  \end{align*}
  Consequently
  \begin{align*}
    \incutsize{U} \geq \incutsize{B \cap U},
  \end{align*}
  and $B \cap U$ is the desired subset of $B$ inducing a minimum
  $(r,t)$-cut.
\end{proof}

\subsection{Rooted preconditioning}

Given a minimum $(r,\supersink)$-cut $(A,B + \supersink)$ for a
$(\lambda,T)$-flow problem, \reflemma{wrap} allows us to contract $A$
and recurse on $G / A$ to resolve the rooted connectivities to the
remaining terminals in $T \cap B$. However we do not necessarily make
progress because we are not able to say anything about the size of $A$
or $B$. In the extreme case, the cut is $(r, V - r + \supersink)$, and
the algorithm would get stuck.

This scenario motivates the second technique of preconditioning the
graph. To this end, for a set of vertices $U$, we first define the
\defterm{in-volume of $U$}, denoted $\involume{U}$, as the sum of
unweighted in-degrees over $U$. Observe that $\involume{U}$ is a tight
upper bound on the size of the subgraph obtained by contracting
$V \setminus U$ into $r$ and dropping all arcs going into $r$.  Now,
for a parameter $\varphi > 0$, we say that $(G,r)$ is \defterm{rooted
  $\varphi$-conditioned} if $\incutsize{U} \geq \varphi \involume{U}$
for all $U \subseteq V - r$.

Given a directed graph $G$ with root $r$, and parameter $\varphi$, we
can extend $G$ and make it rooted $\varphi$-conditioned by adding, for
every non-root vertex $v \in V - r$, and arc $(r,v)$ of capacity
$\varphi \indegree{v}$, where $\indegree{v}$ is the unweighted
in-degree of $v$.

The following lemma extends \reflemma{wrap} when $G$ is
$\varphi$-conditioned to bound the size of $G/A$ proportionally to the
number of terminals in $T \cap B$.
\begin{lemma}
  \labellemma{shrink}
  Suppose $G$ is $\varphi$-conditioned, and let $(A, B+ \supersink)$
  be the minimum $(r,\supersink)$-cut in a $(\lambda,T)$-flow
  problem. Then the contracted graph $G/A$ is
  $\varphi$-conditioned and has at most
  \begin{math}
    \lambda \sizeof{T \cap B} / \varphi
  \end{math}
  edges.
\end{lemma}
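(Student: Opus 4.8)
The plan is to bookkeep the effect of contracting $A$ into $r$ on in-degrees, on cut capacities, and on the edge count, and then to compare the cut $(A, B+\supersink)$ against the trivial cut that isolates only $\supersink$. First I would record the combinatorial effect of the contraction. Since $r \in A$, the vertex set of $G/A$ is $\{r\} \cup B$. For a vertex $v \in B$, the arcs of $G/A$ entering $v$ are in bijection with the arcs of $G$ entering $v$: an arc of $G$ into $v$ whose tail lies in $A$ simply acquires tail $r$, and every other arc into $v$ is unchanged. Hence $\indegree{v}$ is unchanged by the contraction (if parallel arcs are merged it can only decrease, which is all the argument needs), and therefore $\involume{U}$ is unchanged for every $U \subseteq B$. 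Similarly, for $U \subseteq B$ an arc of $G/A$ lies in $\incut{U}$ precisely when the corresponding arc of $G$ has its head in $U$ and its tail in $V \setminus U = A \cup (B \setminus U)$; so $\incutsize{U}$ is the same whether computed in $G$ or in $G/A$.

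With that in hand, the conditioning transfers immediately: for any $U \subseteq B$, the cut capacity $\incutsize{U}$ in $G/A$ equals $\incutsize{U}$ in $G$, which is at least $\varphi \involume{U}$ in $G$ by hypothesis, which in turn equals $\varphi\involume{U}$ in $G/A$. Thus $G/A$ is $\varphi$-conditioned with respect to $r$.

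For the edge bound, recall the convention that $G/A$ discards arcs into $r$; then the edges of $G/A$ are exactly the arcs entering vertices of $B$, so their number equals $\sum_{v \in B}\indegree{v} = \involume{B}$ computed in $G/A$. Applying the just-established $\varphi$-conditioning of $G/A$ to the set $U = B$ gives $\involume{B} \le \incutsize{B}/\varphi$, and by the previous step $\incutsize{B}$ in $G/A$ equals $\incutsize{B}$ in $G$. It remains to show $\incutsize{B} \le \lambda\sizeof{T \cap B}$ in $G$. In the $(\lambda,T)$-flow network the cut $(A, B+\supersink)$ severs exactly the $G$-arcs from $A$ into $B$ together with the auxiliary arcs $(t,\supersink)$ for $t \in T \cap A$, so its capacity is $\incutsize{B} + \lambda\sizeof{T \cap A}$; meanwhile the cut $(V, \{\supersink\})$ has capacity $\lambda\sizeof{T}$. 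Since $(A, B+\supersink)$ is a minimum $(r,\supersink)$-cut, $\incutsize{B} + \lambda\sizeof{T \cap A} \le \lambda\sizeof{T}$, i.e.\ $\incutsize{B} \le \lambda\sizeof{T \cap B}$. Chaining the inequalities, $G/A$ has at most $\lambda\sizeof{T \cap B}/\varphi$ edges.

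I do not expect a serious obstacle here; the one place to be careful is the bookkeeping of the contraction — precisely which arcs survive, whether parallel arcs are merged, and the convention of dropping arcs into $r$ — and, in the final comparison, remembering that the auxiliary arcs $(t,\supersink)$ live in the flow network and not in $G/A$ itself.
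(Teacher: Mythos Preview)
Your proposal is correct and follows essentially the same approach as the paper: both prove $\varphi$-conditioning transfers because rooted cuts in $G/A$ coincide with those in $G$, bound $\incutsize{B}\le\lambda\sizeof{T\cap B}$ by comparing the minimum cut to the trivial $(V,\{\supersink\})$-cut, and finish via $\varphi\involume{B}\le\incutsize{B}$ together with the observation that $G/A$ has at most $\involume{B}$ edges. You simply spell out the contraction bookkeeping more explicitly than the paper does.
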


\begin{proof}
  $G / A$ is rooted $\varphi$-conditioned because every rooted cut in $G / A$
  is the same rooted cut in $G$.

  As the minimum $(r,\supersink)$-cut, the capacity of the
  $(A,B + \supersink)$ is at most the capacity of the $(V,\supersink)$
  cut. The $(A, B + \supersink)$-cut has capacity
  $\lambda \sizeof{A \cap T}$ plus the capacity of the $(A,B)$-cut,
  while the $(V,\supersink)$-cut has capacity equal to
  $\lambda \sizeof{T}$. Taking their difference, the
  capacity of the $(A,B)$-cut is at most $\lambda \sizeof{T \cap B}$.

  On the other hand, because $G$ is $\varphi$-conditioned, the
  capacity of the $(A,B)$-cut is at least $\varphi \involume{B}$.
  Together, we have
  \begin{align*}
    \varphi \involume{B} \leq \lambda \sizeof{T \cap B}.
  \end{align*}
  To complete the proof, we recall that $G / A$ has at most
  $\involume{B}$ edges.
\end{proof}

\begin{lemma}
  \labellemma{shrink-wrap}
  The rooted Steiner connectivity problem with $k$ terminals and
  connectivity parameter $\lambda$ on a rooted $\varphi$-conditioned
  graph can be solved in
  $\bigO{\lambda k \log{k} / \varphi m}$ flows on graphs of
  size $\bigO{m}$.
\end{lemma}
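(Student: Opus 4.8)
The plan is to spell out the recursive ``shrink-wrap'' algorithm sketched in the overview and then bound its total flow work. On input a rooted $\varphi$-conditioned graph $G$ with root $r$, a set $T$ of $k$ terminals, and the parameter $\lambda$, first discard any terminal of in-degree $0$ (it has $(r,t)$-connectivity $0 < \lambda$, certified by the empty cut). If $k \le 1$, answer directly: a single maximum-flow / minimum-cut computation between $r$ and the lone terminal $t$ (if any) decides whether the $(r,t)$-connectivity is at least $\lambda$ and supplies the certifying (pre)flow or cut. Otherwise, build the $(\lambda,T)$-flow network, compute a maximum $(r,\supersink)$-flow and the corresponding minimum cut $(A, B + \supersink)$, and report every $t \in T \cap A$ as having $(r,t)$-connectivity at least $\lambda$, extracting the certifying flow as in the proof of \reflemma{wrap}. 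Then contract $A$ into $r$ inside $G$, split $T \cap B$ arbitrarily into $T_1, T_2$ with $\sizeof{T_i} \le \lceil \sizeof{T \cap B}/2 \rceil$, and recurse on $(G/A, T_1, \lambda)$ and $(G/A, T_2, \lambda)$; by \reflemma{shrink}, $G/A$ is again rooted $\varphi$-conditioned, so these are legitimate instances of the same problem.

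Correctness I would establish by induction on $k$. The terminals in $T \cap A$ are handled by the first conclusion of \reflemma{wrap}. For $t \in T \cap B$, the second conclusion of \reflemma{wrap} gives a minimum $(r,t)$-cut of $G$ whose sink side lies inside $B$ and is therefore disjoint from $A$; contracting $A$ (which lies on the source side of that cut) leaves the cut and its capacity intact, while conversely every $(r,t)$-cut of $G/A$ is an $(r,t)$-cut of $G$ of the same capacity, so the minimum $(r,t)$-cut value agrees in $G$ and in $G/A$. Hence deciding the connectivity of each $t \in T \cap B$ inside $G/A$, which the recursion does correctly by the inductive hypothesis, answers the original question, and the certifying cuts and flows pull back through the contraction. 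The base cases $k \le 1$ are immediate.

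For the running time I would track two quantities down the recursion tree: the number of terminals and the number of edges in the current flow networks. Since $\sizeof{T_i} \le \lceil \sizeof{T \cap B}/2 \rceil \le \lceil k/2 \rceil$, which strictly decreases while at least two terminals remain, the tree has depth $\bigO{\log k}$. At any fixed level the subproblems carry pairwise-disjoint subsets of the original $T$, so their terminal counts sum to at most $k$; and for a subproblem $(G_P, T_P, \lambda)$ with minimum cut $(A_P, B_P + \supersink)$, \reflemma{shrink} bounds the edges of the single graph $G_P/A_P$ inherited by the (at most two) children of $P$ by $\lambda \sizeof{T_P \cap B_P}/\varphi$, which --- every terminal having in-degree at least one --- also dominates the auxiliary arcs those children add. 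Summing over one level thus bounds the total edge count of the next level's networks by $(2\lambda/\varphi)\sum_P \sizeof{T_P} \le 2\lambda k/\varphi$, with the root level contributing only $\bigO{m}$ edges. Over the $\bigO{\log k}$ levels the flow networks therefore have total size $\bigO{m + \lambda k \log k/\varphi}$; since each is obtained from $G$ by contractions together with $\bigO{m}$ auxiliary arcs, hence has $\bigO{m}$ edges, batching the same-level networks into disjoint unions of size $\bigO{m}$ (on which a maximum flow decomposes component-wise) instantiates the whole computation as $\bigO{(1 + \lambda k/(\varphi m))\log k}$ maximum-flow computations on $\bigO{m}$-edge graphs, which is the stated bound.

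The per-level edge count is the one step I expect to be delicate. It is precisely there that \reflemma{shrink} --- each contracted graph being small relative to the \emph{uncertified} terminals carried into it --- has to be married to the elementary observations that the subproblems at a level partition a subset of $T$ and that each contracted graph is shared by at most two children, so that the sum over the $\bigO{\log k}$ levels telescopes instead of exploding. The depth bound, the correctness induction, and the final repackaging into $\bigO{m}$-edge flow instances are all routine once this estimate is in hand.
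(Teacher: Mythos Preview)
Your proposal is correct and follows essentially the same approach as the paper: recursive halving of the terminal set, \reflemma{wrap} for correctness and to justify contraction, \reflemma{shrink} for the per-level edge bound, and batching same-level instances into $\bigO{m}$-edge flows. The one cosmetic difference is that the paper splits $T$ into halves \emph{before} running the $(\lambda,T_i)$-flow on each half (so the two children inherit distinct contracted graphs $G/A_1$, $G/A_2$), whereas you run the $(\lambda,T)$-flow first and only then split $T\cap B$ (so both children inherit the same $G/A$); either order yields the same $\bigO{\lambda k/\varphi}$ per-level edge count and hence the stated bound.
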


\begin{proof}
  We prove that for $k \leq \varphi m / \lambda$ terminals, the rooted
  Steiner connectivity problem takes $\bigO{\log k}$ flows on graphs
  of size $\bigO{m}$. The more general claim follows from breaking up
  any given set of terminals into groups of size
  $\lambda m / \varphi$.

  Let $T$ be a set of $k \leq \varphi m / \lambda$ terminals. We solve
  the rooted Steiner connectivity problem for $T$ recursively.

  In the base case, $T$ consists of a single terminal $t$, and we
  compute the maximum $(r,t)$-flow and minimum $(r,t)$-cut. We either
  certify that $(r,t)$-connectivity is at least $\lambda$, or return
  the (sink component of the) minimum $r$-cut.

  In the general case, $T$ has more than one vertex. Partition $T$
  into equal halves $T_1$ and $T_2$. For $i \in \setof{1,2}$, let
  $(A_i, B_i + \supersink)$ be the minimum $(r,\supersink)$-cut in the
  $(T_i,\lambda)$-flow problem. Each terminal $t \in T_i \cap A_i$ has
  $r$-connectivity $\lambda$. We recurse on $G / A_i$ and $T_i \cap B_i$
  to address the remaining terminals. See \reffigure{shrink-wrap} for pseudocode.

  \begin{figure}[t]
    \begin{framed}
      \begin{algorithm}{shrink-wrap}{$G$,$r$,$T$,$\lambda$}
        \begin{blockcomment}
          For all $t \in T$, either certify that the
          $(r,t)$-connectivity is at least $\lambda$, or return an
          $(r,t)$-cut of capacity less than $\lambda$.
        \end{blockcomment}
      \item If $T = \setof{t}$:
        \begin{steps}
        \item Compute the minimum $(r,t)$-cut and return accordingly.
        \end{steps}
      \item Halve $T$ into $T_1$ and $T_2$.
      \item For $i = 1,2$:
        \begin{steps}
        \item $(A_i, B_i + \supersink) \gets$ minimum
          $(r,\supersink)$-cut in the $(\lambda, T_i)$-flow problem.
          \begin{blockcomment}
            All $t \in T \cap A_i$ have $r$-connectivity at least
            $\lambda$.
          \end{blockcomment}
        \item If $T_i \cap B_i$ is nonempty, then recurse on $G/A_i$ and
          $T_i \cap B_i$.
        \end{steps}
      \item Return the results gathered from above.
      \end{algorithm}
    \end{framed}
    \caption{Pseudocode of a recursive algorithm solving the rooted
      Steiner connectivity problem.}
    \labelfigure{shrink-wrap}
  \end{figure}

  Correctness of this recursive algorithm follows from
  \reflemma{wrap}.  As per the running time, each subproblem takes 2
  maximum flows over its input call excluding recursive calls.  There
  are $2^i$ subproblems of depth $i$.  Each subproblem has $k / 2^i$
  terminals and, by \reflemma{shrink},
  $\bigO{\lambda k / \varphi 2^i} \leq \bigO{m / 2^i}$ edges each. All
  flow problems at the same depth can be solved with a single flow
  taking the disjoint union of all the graphs, which forms a graph
  with $\bigO{m}$ edges. The depth of the recursion is
  $\log{\sizeof{T}}$, since the number of terminals always halves.
  Overall we solve $\log{\sizeof{T}} \leq \log{n}$ flow problems with
  $\bigO{m}$ edges.
\end{proof}

\subsection{Rooted connectivity}

To conclude this section, we now give a randomized reduction from
approximate rooted connectivity to polylogarithmically many
well-conditioned instances of rooted Steiner connectivity.

\begin{theorem}
  \labeltheorem{apx-edge-cut} A $\epsmore$-minimum rooted edge cut can
  be computed with high probability in randomized running time bounded
  by $\bigO{\log{n}^4/\eps}$ instances of single-commodity flow on
  graphs of $\bigO{m}$ edges and polynomially bounded capacities.
\end{theorem}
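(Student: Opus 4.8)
The plan is to reduce an approximately minimum $r$-cut to $\bigO{\log^4 n/\eps}$ calls to \reflemma{shrink-wrap}, wrapped in a few standard layers of guessing, searching, and amplification. Write $\optlambda$ for the rooted connectivity of $(G,r)$, let $\optS \subseteq V - r$ be the sink component of a minimum $r$-cut (so $\incutsize{\optS} = \optlambda$), and let $\mu = \involume{\optS} \le m$ be its in-volume. We know neither $\optlambda$ nor $\mu$, so we guess. First, guess $\mu$ up to a factor of two by trying each $\tilde\mu \in \setof{1,2,4,\dots}$ up to $m$; one guess has $\mu/2 \le \tilde\mu \le \mu$, and there are $\bigO{\log n}$ guesses. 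For each fixed $\tilde\mu$, binary-search for $\optlambda$ over the (after rescaling, polynomially bounded integral) range of cut values: given a candidate $\lambda$, set $\varphi = \Theta(\eps\lambda/\tilde\mu)$, form the preconditioned graph $G'$ by adding an arc $(r,v)$ of capacity $\varphi\indegree{v}$ for every non-root $v$ (rescaling and rounding capacities back to polynomially bounded integers, at the cost of one more $\epsmore$ factor of error), sample a terminal set $T$ including each $v \in V - r$ independently with probability $\min\setof{\indegree{v}/\tilde\mu,\,1}$, and invoke \reflemma{shrink-wrap} on the $(G',r,T,\lambda)$ instance of rooted Steiner connectivity. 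If it reports a rooted cut, record it; otherwise it has certified that every $t \in T$ has $(r,t)$-connectivity at least $\lambda$ in $G'$ — outcomes which drive the binary search downward and upward respectively. To succeed with high probability, reuse the same $\bigO{\log n}$ independent samples $T$ across all search steps, reporting a cut whenever any sample does. Finally, output the minimum-capacity rooted cut (measured in the original $G$) recorded over all guesses, samples, and search steps.

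For correctness, fix the good guess $\tilde\mu \in [\mu/2,\mu]$. A single sample misses $\optS$ only with constant probability: $\Pr[T \cap \optS = \emptyset] \le \prod_{v \in \optS}(1 - \indegree{v}/\tilde\mu) \le e^{-\involume{\optS}/\tilde\mu} \le e^{-1}$, so with high probability some sample $T$ contains a vertex $t^{\star} \in \optS$. Since $\incut{\optS}$ is a rooted cut of $G$, its capacity in $G'$ is $\incutsize{\optS} + \varphi\involume{\optS} \le \optlambda + \varphi\mu \le \optlambda + \bigO{\eps\lambda}$, which bounds the $(r,t^{\star})$-connectivity in $G'$. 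Hence whenever $\lambda \ge (1+\bigOmega{\eps})\optlambda$, $t^{\star}$ is unsaturated in the $(\lambda,T)$-flow problem, and by \reflemma{wrap} it stays unsaturated through every contraction performed in \reffigure{shrink-wrap}, so \reflemma{shrink-wrap} reports a rooted cut of $G'$ of capacity below $\lambda$; discarding the auxiliary arcs turns it into a rooted cut of $G$ of capacity in $[\optlambda,\lambda)$. Conversely, for $\lambda \le \optlambda$ every rooted cut of $G'$ has capacity at least $\optlambda \ge \lambda$, so nothing is reported. Thus the search predicate is $0$ below $\optlambda$ and $1$ above $(1+\bigOmega{\eps})\optlambda$, and a binary search driving the endpoint ratio below $\epsmore$ terminates having recorded a rooted cut of $G$ of capacity at most $(1+\bigO{\eps})\optlambda$. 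Every recorded cut is a genuine rooted cut of $G$, hence of capacity at least $\optlambda$, so taking the minimum over all runs — including the meaningless ones from wrong guesses or missing samples — is safe; rescaling $\eps$ by a constant gives the stated guarantee.

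For the running time, fix a guess $\tilde\mu$, a sample $T$, and a search value $\lambda$. A Chernoff bound gives $\sizeof{T} = \bigO{m/\tilde\mu}$ with high probability, so $G'$ has $\bigO{m}$ edges, and $G'$ is $\varphi$-conditioned with $\varphi = \Theta(\eps\lambda/\tilde\mu)$. \Reflemma{shrink-wrap} then solves the instance in $\bigO{\lambda\sizeof{T}\log\sizeof{T}/(\varphi m)} = \bigO{\log n/\eps}$ single-commodity flows on graphs of $\bigO{m}$ edges and polynomially bounded capacities. Multiplying by the $\bigO{\log n}$ guesses of $\tilde\mu$, the $\bigO{\log n}$ amplification samples, and the $\bigO{\log n}$ steps of the binary search over $\lambda$ gives $\bigO{\log^4 n/\eps}$ flows overall.

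The step I expect to be the main obstacle is that the $\lambda$-search is not a search over a monotone predicate: the preconditioned graph $G'$ — and hence which terminals are unsaturated — depends on $\lambda$ through $\varphi$, so a priori \reflemma{shrink-wrap} could report a cut at some $\lambda$ but not at a larger $\lambda'$. The resolution is the boundary behavior above: non-monotonicity is confined to the window $(\optlambda,(1+\bigOmega{\eps})\optlambda]$, which has bounded multiplicative width, so however the binary search behaves inside it, it still localizes $\optlambda$ to within a $\epsmore$ factor. (Alternatively, one can run the whole scheme once with $\eps$ a constant to get a constant-factor estimate of $\optlambda$, fix $\varphi$ from that estimate, and then the refined binary search is genuinely monotone.) The remaining ingredients — that a cut reported on a contracted, preconditioned graph pulls back to a rooted cut of $G$ of the claimed capacity, that the rescaling and rounding of capacities costs only a $\epsmore$ factor, and that a union bound over the $\bigO{\log^4 n/\eps}$ flow instances keeps the total failure probability inverse-polynomial — are routine.
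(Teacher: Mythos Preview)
Your proposal is correct and follows essentially the same approach as the paper: guess the in-volume $\mu$ to a factor of two, precondition with $\varphi = \Theta(\eps\lambda/\mu)$, sample terminals in proportion to in-degree, and invoke \reflemma{shrink-wrap}, all wrapped in a binary search over $\lambda$. The differences are cosmetic: the paper nests the binary search over $\lambda$ outside the enumeration of $\mu$ (you do the reverse), and the paper folds the $\bigO{\log n}$ amplification into the sampling rate (sampling with probability $\log(n)\indegree{v}/\mu$ once) rather than taking $\bigO{\log n}$ independent samples at rate $\indegree{v}/\tilde\mu$ as you do; either way the four logarithmic factors arise from the same sources. Your explicit discussion of the non-monotonicity of the search predicate is a point the paper passes over in one sentence, and your resolution---that the predicate is determined outside a multiplicative window of width $1+\bigO{\eps}$, and any reported cut is a genuine $r$-cut so the overall minimum is safe regardless---is correct and matches what the paper implicitly relies on.
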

\begin{proof}
  At the outermost level, we binary search for a
  $\epsmore$-approximation to the true rooted connectivity
  $\optlambda$. We first note that $\optlambda$ has to be within a
  polynomial factor of some edge weight, and there are polynomially many
  powers of $1+\epsilon$ in these ranges. Thus we need $\bigO{\log n}$
  probes in a binary search to obtain a $\epsmore$-approximation to
  $\optlambda$.

  A single probe of the binary search is parameterized by a
  connectivity $\lambda$. We implement a subroutine guaranteeing that
  if the rooted connectivity is at most $\lambda$, then it will output
  a rooted cut of weight at most $\epsmore \lambda$.  Then by the
  union bound all $\bigO{\log n}$ probes succeed with high
  probability, leading to the desired $\epsmore$-approximation to
  $\optlambda$ with high probability.

  The subroutine for a given connectivity $\lambda$ is broken down
  into $\bigO{\log n}$ problems parameterized by an in-volume
  parameter $\mu$. $\mu$ will be set to each power of $2$ between $1$
  and $m$, and intuitively tries to guess the in-volume of the minimum
  $r$-cut.

  Fix $\lambda$ and $\mu$.  Let $H_{\lambda,\mu}$ precondition $G$ to
  be rooted $\varphi$-conditioned for
  $\varphi = \bigOmega{\eps \lambda / \mu}$ by adding arcs $(r,v)$ of
  weight $\eps \lambda \indegree{v} / 2 \mu$ for all $v \in V - r$. We
  also truncate the edge weights to be at most $2 \lambda$ and at
  least $\eps \lambda / 2 m$, so that the capacities in
  $H_{\lambda,\mu}$ are polynomially bounded (modulo rescaling).  Now,
  let $T \subseteq V$ sample each vertex $v \in V$ independently with
  probability $\log{n} \indegree{v} / \mu$, and run \algo{shrink-wrap}
  on the rooted Steiner connectivity problem in $H_{\lambda,\mu}$ with
  terminals $T$ and connectivity $\epsmore \lambda$. If
  \algo{shrink-wrap} returns a rooted cut of capacity less than
  $\epsmore \lambda$ in $H_{\lambda,\mu}$, then this cut also has
  capacity less than $\epsmore \lambda$ in $G$.

  Suppose there exists a set $U \subseteq V - r$ inducing an $r$-cut
  of size at most $\lambda$ and with in-volume in the range
  $\mu / 2 \leq \involume{U} < \mu$, in $G$. In $G_{\lambda,\mu}$, $U$
  induces an $r$-cut of size less than $\epsmore \lambda$. With high
  probability, $T$ contains at least one vertex from $U$. In this case
  \algo{shrink-wrap} will return a cut of capacity less than
  $\epsmore \lambda$. Over all $\mu$, if the rooted connectivity is
  less than $\lambda$, then we find a rooted cut of capacity less than
  $\epsmore \lambda$ with high probability, as desired.

  As per the running time, for fixed $\lambda$ and $\mu$, observe that
  each sample $T$ has $k = \log{n} m / \mu$ vertices in
  expectation. By \reflemma{shrink-wrap}, each call to
  \algo{shrink-wrap} takes
  $\bigO{\lambda k \log{n} / \varphi m} = \bigO{\log{n}^2 / \eps}$
  flows on graphs of size $\bigO{m}$. Over $\bigO{\log{n}}$ choices of
  $\mu$, we have $\bigO{\log{n}^3 / \eps}$ total flows on graphs of
  size $\bigO{m}$ for fixed $\lambda$. Finally, we have
  $\bigO{\log n}$ choices of $\lambda$ in the binary search, hence
  $\bigO{\log{n}^4 / \eps}$ flows overall.
\end{proof}

\begin{remark}
  When the edge weights are bounded between $1$ and $W$, the binary
  search for a $\epsmore$-approximation to $\lambda$ takes only
  $\bigO{\log{\log{nW}/\epsilon}} = \bigO{\log \log{n W} +
    \log{1/\epsilon}}$ probes, as $\optlambda$ is between $1$ and
  $m W$, and there are $\bigO{\log{nW} / \epsilon}$ powers of
  $1+\epsilon$ in this range. The logarithmic factors in the overall
  running time improve accordingly. The same holds for the vertex
  connectivity algorithms below.
\end{remark}

\section{Vertex connectivity}

\labelsection{vertex-connectivity}

This final section is about vertex connectivity, proving the bounds
for the $\epsmore$-approximation algorithms in \cref{theorem:apx-vc}
and the exact algorithm in \cref{small-vc}. At a high-level, the two
main structural ideas driving the approximate edge connectivity
algorithm, uncrossing cuts via submodularity (to ``wrap'' the
$(r,t)$-cuts of uncertified terminals $t$) and preconditioning the
graph (to ``shrink'' the contracted subgraph) are not overly specific
to edge cuts. One can develop analogous techniques in the
vertex-capacitated language, paralleling the presentation in
\refsection{edge-connectivity}.

For ease of exposition, we instead design our approximate vertex
connectivity algorithm based on a reduction to the \algo{shrink-wrap}
algorithm for edge capacities already analyzed in
\refsection{edge-connectivity}. The edge-capacitated instances are
based on the standard edge-capacitated auxiliary graph commonly used to
model vertex-capacitated graphs.

Throughout this section, let $G = (V,E)$ be a directed graph with
vertex capacities $\capacity: V \to \preals$.  For a set of vertices
$S$, we abbreviate their total capacity by
$\capacity{S} = \sum_{v \in S} \capacity{v}$.  For a set of vertices
$S$, we let $\inneighbors{S} \subseteq V \setminus S$ and
$\outneighbors{S} \subseteq V \setminus S$ denote the in-neighborhood
and out-neighborhood of $S$, respectively. For a vertex $v$, we
abbreviate $\vincut{v} = \vincut{\setof{v}}$ and
$\voutcut{v} = \voutcut{\setof{v}}$.

When in the context of rooted vertex connectivity, let $r \in V$ be a
fixed root vertex.  Let $V' = V \setminus \parof{r + N^+(r)}$. Every
$r$-cut is of the form $\vincut{S}$ for some $S \subseteq V'$.

\subsection{Rooted vertex connectivity}
We first present the $\epsmore$-approximation algorithm for the
minimum rooted vertex cut.

\begin{theorem}
  \labeltheorem{apx-rooted-vc} A $\epsmore$-minimum rooted vertex cut
  can be computed in randomized time bounded by
  $\bigO{\log{n}^4/\eps}$ single-commodity flows on graphs of
  $\bigO{m}$ edges.
\end{theorem}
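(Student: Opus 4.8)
The plan is to reduce the approximate rooted vertex connectivity problem to the approximate rooted \emph{edge} connectivity machinery (in particular to \algo{shrink-wrap} and the analysis of \reftheorem{apx-edge-cut}) by passing to the standard vertex-splitting auxiliary graph. Concretely, build $\hat G$ by splitting each non-root vertex $v \in V'$ into an in-copy $\vin$ and an out-copy $\vout$ joined by an arc $(\vin,\vout)$ of capacity $\capacity{v}$, replacing each original arc $(u,v)$ by $(\uout,\vin)$ of capacity $+\infty$, and leaving the root (and its out-neighbors, which can never be separated from $r$) suitably identified so that arcs out of $r$ go to the relevant in-copies. Then $r$-vertex-cuts in $G$ correspond exactly to finite-capacity $r$-edge-cuts in $\hat G$, and the minimum rooted vertex cut equals the minimum rooted edge cut in $\hat G$; moreover $\hat G$ has $\bigO{n}$ vertices and $\bigO{m}$ edges.

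The core steps are then: (i) verify the correspondence between rooted vertex cuts in $G$ and finite rooted edge cuts in $\hat G$, including that the ``bad'' infinite-capacity cuts are never minimum and are automatically avoided once we restrict attention to cuts of capacity below the current threshold $\lambda$; (ii) observe that running the binary-search-over-$\lambda$, guess-$\mu$, precondition-and-sample framework of \reftheorem{apx-edge-cut} on $\hat G$ works verbatim, because that framework only needs an edge-capacitated rooted-connectivity instance with polynomially bounded capacities (after truncation) and $\bigO{m}$ edges, all of which $\hat G$ satisfies; (iii) check the sampling step — we want the random terminal set to hit the sink side of the minimum cut with high probability, which requires sampling each candidate sink vertex with probability proportional to its in-degree in $\hat G$, and one must confirm this is still $\bigO{\log n \cdot m/\mu}$ terminals in expectation and that the in-volume of the optimal sink set in $\hat G$ is within the guessed range for some power-of-two $\mu$ between $1$ and $\bigO{m}$. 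Putting these together yields the same $\bigO{\log n}$ (binary search) $\times\ \bigO{\log n}$ (choices of $\mu$) $\times\ \bigO{\log^2 n/\eps}$ (cost of one \algo{shrink-wrap} call, by \reflemma{shrink-wrap}) $= \bigO{\log^4 n/\eps}$ flows on $\bigO{m}$-edge graphs.

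The main obstacle I expect is the bookkeeping around the infinite-capacity arcs and the in-volume accounting in $\hat G$. First, one must be careful that preconditioning $\hat G$ (adding arcs $(r, x)$ of capacity $\varphi \indegree{x}$ for vertices $x$ of $\hat G$) does not inadvertently perturb the infinite arcs or create spurious small cuts; since the added capacities are tiny relative to $\lambda$ and the $+\infty$ arcs can simply be capped at $2\lambda$ during truncation (any genuine vertex cut has capacity $<\lambda$, so capping at $2\lambda$ is harmless), this is manageable but needs a sentence. Second, the in-volume $\involume{\cdot}$ used to size the contracted graphs and to drive sampling is measured with respect to \emph{unweighted} in-degrees in $\hat G$; because each original vertex contributes a bounded number of auxiliary arcs, $\involume{\hat S}$ for the image $\hat S$ of an original sink set $S$ is $\Theta$ of the number of original arcs into $S$ plus $|S|$, so the guessing range and the ``hits the sink side'' probability go through with only constant-factor changes. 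A minor additional point: the sink side of a finite rooted cut in $\hat G$ is automatically ``downward closed'' through split vertices (if $\vout$ is on the sink side then so is $\vin$, else the $+\infty$ arc $(\vin,\vout)$ is cut), so when we sample terminals it suffices to sample among the out-copies, or equivalently among original vertices, keeping the terminal count at $\bigO{\log n\cdot m/\mu}$.

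Finally, I would close by noting that everything above is stated in terms of \algo{shrink-wrap} as a black box on the edge-capacitated auxiliary graph, so no new uncrossing or preconditioning lemma is needed for the rooted vertex case — the $\bigO{\log^4 n/\eps}$ bound is literally that of \reftheorem{apx-edge-cut} applied to $\hat G$, and the only content is the (routine but must-be-stated) correctness of the vertex-to-edge reduction and the in-volume estimates that guarantee the sampling and shrinking steps behave as in the edge case.
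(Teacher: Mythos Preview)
Your plan is essentially the paper's own proof: pass to the standard split graph, then run the binary-search / guess-$\mu$ / precondition / sample / \algo{shrink-wrap} pipeline of \reftheorem{apx-edge-cut} verbatim, with the paper carrying out explicitly the capacity and in-volume bookkeeping you flag as needing a sentence.

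One slip worth fixing: the internal arc $(\vin,\vout)$ has capacity $\capacity{v}$, not $+\infty$, so your ``downward closed'' implication points the wrong way --- what a finite cut forces is that $\vin$ on the sink side drags in $\uout$ for every in-neighbor $u$ (via the $+\infty$ arc $(\uout,\vin)$), not that $\vout$ drags in $\vin$. Correspondingly, the paper samples original vertices $v\in V'$ with probability proportional to $\indegree{v}$ in $G$ and uses the \emph{in}-copies $\Tin=\{\tin:t\in T\}$ as the terminals for \algo{shrink-wrap}, not the out-copies.
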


\begin{proof}
  The high-level approach is very similar to that of the proof of
  \reftheorem{apx-edge-cut}.  We give a randomized reduction to
  $\bigO{\log{n}^3/\eps}$ well-conditioned instances of
  edge-capacitated rooted Steiner connectivity, in the standard
  edge-capacitated auxiliary graph modeling the vertex-capacitated
  graph $G$.

  In the exact same fashion as in the proof of
  \reftheorem{apx-edge-cut}, binary searching for the optimal
  connectivity $\optlambda$ and enumerating an in-volume parameter
  $\mu$ in powers of $2$, it suffices to implement the following
  guarantee. Given a connectivity parameter $\lambda$ and an in-volume
  parameter $\mu$, if there is a vertex $r$-cut $\inneighbors{U}$ of
  capacity less than $\lambda$ induced by a sink component $U$ of
  in-volume between $\mu/2$ and $\mu$, output an $r$-cut of capacity
  at most $\epsmore \lambda$ with high probability.

  Fix $\lambda$ and $\mu$.  Let $H$ be the standard edge-capacitated
  graph modeling the vertex capacitated graph in $G$. Namely, every
  vertex $v$ is replaced split into an arc $(\vin,\vout)$ with
  capacity equal to the capacity of $v$, and each arc $(u,v)$ is
  replaced by an arc $(\uout,\vin)$ of infinite capacity. $H$ has
  $m + n$ edges and $2n$ vertices.  It is easy to see that for all
  $t \in V'$, any finite $(\rout, \tin)$-cut in $H$ can be taken to be
  of the form
  $\incut{\Xout \cup \Uin \cup \Uout} = \setof{(\xin,\xout) \where x
    \in X}$, where $X = \incut{U}$ is a vertex $(r,t)$-cut induced by
  a set $U \subseteq V'$ containing $t$. Note that this edge cut has
  the same capacity as the corresponding vertex cut.

  Next we precondition $H$. Let $H_{\lambda,\mu}$ be obtained from $H$
  by adding, for every vertex $v \in V_{H} - \rout$, an arc
  $(\rout,v)$ of capacity $\eps \lambda \indegree{v}_H / 6 \mu$. We
  also truncate the edges capacities to be at most $2 \lambda$ and at
  least $\eps \lambda / 2 n$, making them polynomially bounded.
  $H_{\lambda,\mu}$ is $\varphi$-conditioned for
  $\varphi = \bigOmega{\eps \lambda / \mu}$.

  Let $T \subseteq V'$ independently sample each vertex $v$ with
  probability $\bigO{\log{n} \indegree{v} / \mu}$, and let
  $\Tin = \setof{\tin \where t \in T}$. We apply \algo{shrink-wrap} to
  the rooted Steiner connectivity instance in $H_{\lambda,\mu}$ with
  connectivity $\epsmore \lambda$ and terminals $\Tin$.  If
  \algo{shrink-wrap} returns an $\rout$-cut $\incut{Z}_{H'}$ of
  capacity $\leq \epsmore \lambda$, then letting
  $U = \setof{u \in V' \where \uin \in Z}$, we return the vertex
  $r$-cut in $G$ with sink component $U$. Since $\incut{Z}_{H'}$ has
  finite capacity, $\incut{Z}_{H'}$ must include the arc
  $(\xin,\xout)$ for all $x \in \inneighbors{U}$. It follows that
  $\inneighbors{U}$ is a vertex $r$-cut with capacity at most
  $\epsmore \lambda$.

  Suppose there exists a set $U \subseteq V'$ inducing a vertex
  $r$-cut $X = \vincut{U}$ of capacity less than $\lambda$, and with
  in-volume $\mu / 2 \leq \involume{U} \leq \mu$.  With high
  probability, $T$ contains at least one vertex from $U$.  In $H'$,
  the corresponding edge cut $\incut{\Xout \cup \Uin \cup \Uout}_{H'}$
  has capacity
  \begin{align*}
    \capacity{X} + \frac{\eps \lambda}{2\mu}\parof{\sum_{x \in X}
    \indegree{\xout}_H + \sum_{u \in U} \indegree{\uout}_H +
    \indegree{\uin}_H}
    +
    \frac{\eps \lambda}{4 n}  \sizeof{\incut{\Xout \cup \Uin \cup
    \Uout}_{H'}}.
  \end{align*}
  Here the second term accounts for the auxiliary arcs added to make
  $H+{\lambda,\mu}$ rooted $\varphi$-conditioned, and the third term
  accounts for truncating the edges weights to be at least
  $\eps \lambda / 4n$. To upper bound this capacity,
  we have
  \begin{math}
    \capacity{X} \leq \lambda
  \end{math}
  by choice of $X$,
  \begin{align*}
    \frac{\eps \lambda}{2\mu}\parof{\sum_{x \in X}
    \indegree{\xout}_H + \sum_{u \in U} \indegree{\uout}_H +
    \indegree{\uin}_H}
    &=                          %
      \frac{\eps \lambda}{6 \mu}\parof{\sizeof{X} +
      \sizeof{U} + \involume{U}} %
    \\
    &\tago{<}                   %
      \frac{\eps \lambda}{6 \mu} \parof{3 \involume{U}} \leq
      \frac{\eps}{2} \lambda,
  \end{align*}
  and
  \begin{align*}
    \frac{\eps \lambda}{4 n}  \sizeof{\incut{\Xout \cup \Uin \cup
    \Uout}_{H'}}
    \leq
    \frac{\eps \lambda}{4 n} \parof{2 \sizeof{X} + 2 \sizeof{U}}
    <                        %
    \frac{\eps \lambda}{2},
  \end{align*}
  for a total capacity less than $\epsmore \lambda$.  In \tagr, we
  observe that $\sizeof{U} \leq \involume{U}$ because every $u \in U$
  has in-degree at least one, and $\sizeof{X} \leq \involume{U}$
  because every $x \in X$ has an arc to some $u \in U$.  Thus $U$
  induces an $(\rout,\tin)$-cut of capacity less than
  $\epsmore \lambda$ in $H'$. It follows that \algo{shrink-wrap}
  returns an $r$-cut of weight less than $\epsmore \lambda$, as
  desired.

  The running time analysis is essentially the same as in the proof of
  \reftheorem{apx-edge-cut}.  For fixed $\lambda$ and $\mu$, we have a
  rooted Steiner connectivity problem with
  $k = \bigO{m \log{n} / \mu}$ vertices in expectation, in a rooted
  $\varphi$-conditioned graph of $\bigO{m}$ edges for
  $\varphi = \bigO{\eps \lambda / \mu}$. By \reflemma{shrink-wrap},
  each subproblem takes $\bigO{\log{n}^2 / \eps}$ flows on graphs of
  size $\bigO{m}$. We have $\bigO{\log n}$ choices of $\mu$ multiplied
  by $\bigO{\log n}$ choices of $\lambda$, giving
  $\bigO{\log{n}^4 / \eps}$ total flow problems.
\end{proof}

\subsection{Global vertex connectivity}

We now shift from rooted vertex connectivity to global vertex
connectivity. The reduction from global to rooted vertex connectivity
is not as straightforward because any particular vertex $r$ selected
as the root may actually be in the minimum global vertex cut.  That
said, there are well-known randomized reductions from global vertex
connectivity to (multiple instances of) rooted vertex connectivity
\cite{GHR00,JMY25}.  These reductions generate some nontrivial
overhead in the running time which only become exposed here when we
seek almost-linear running times. Fortunately, the overhead can be
mitigated to a logarithmic factor here in part because we only seek a
$\epsmore$-approximation. We prove the following.

\begin{theorem}
  A $\epsmore$-minimum global vertex cut can be computed with high
  probability in randomized time bounded by $\bigO{\log{n}^5/\eps}$
  single-commodity flows on graphs of $\bigO{m}$ edges and
  polynomially bounded capacities.
\end{theorem}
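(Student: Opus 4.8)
The plan is to reduce global vertex connectivity to $\bigO{\log n}$ instances of rooted vertex connectivity, invoking \reftheorem{apx-rooted-vc} for each, and then account for the extra logarithmic factor. The classical randomized reduction (as in \cite{GHR00}) is: the minimum global vertex cut has some sink component $S$ and source component whose complement is $S \cup N^-(S)$; if we pick a root $r$ that lies strictly in the source side (i.e., $r \notin S \cup N^-(S)$) \emph{and} some fixed target $t \in S$, then the minimum $(r,t)$ vertex cut equals the global minimum. The issue is we do not know $S$. The standard fix samples the root: if the minimum global vertex cut has value $\lambda$, then the smaller of the two sides $S$ or its ``mirror'' (the sink side of the reversed graph) has at most roughly $n/2$ vertices, and one argues that a random root together with random targets hits the right configuration with decent probability. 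More carefully, one guesses (in powers of two) the size $s = \sizeof{S}$ of the sink component of the minimum cut; sampling $\bigO{(n/s)\log n}$ roots uniformly ensures with high probability that some sampled root avoids $S \cup N^-(S)$ when $\sizeof{S \cup N^-(S)} \leq n - \bigOmega{n}$, and symmetrically one also runs the reversed graph to handle the case where the source side is the small one.

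\textbf{First I would} set up the reduction precisely. Fix a guess $\lambda$ for the global vertex connectivity (binary-searched over $\bigO{\log n}$ powers of $1 + \eps$ as in \reftheorem{apx-edge-cut}). For a candidate minimum cut $(S, N^-(S), \comp{S \cup N^-(S)})$, note that $\sizeof{S} + \sizeof{\comp{S \cup N^-(S)}} \geq n - \lambda$, so at least one of the two ``outer'' parts has $\geq (n-\lambda)/2 \geq n/4$ vertices (assuming $\lambda \leq n/2$, else vertex connectivity is trivially computable since the graph is dense enough that one can afford $\bigO{n}$ flows, or handle separately). Say $\sizeof{\comp{S \cup N^-(S)}} \geq n/4$ (the other case is handled by running everything on the reverse graph). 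Then a uniformly random vertex $r$ lands in $\comp{S \cup N^-(S)}$ with probability $\geq 1/4$; sampling $\bigO{\log n}$ roots independently, with high probability at least one of them, call it $r$, satisfies $r \notin S \cup N^-(S)$, so that $N^-(S)$ is a valid rooted $r$-cut and the minimum rooted $r$-cut has value $\leq \lambda$. Running the $\epsmore$-approximate rooted algorithm of \reftheorem{apx-rooted-vc} from this $r$ then certifies a global vertex cut of value $\leq \epsmore\lambda$. Taking the best cut found over all sampled roots and over both $G$ and its reverse gives the $\epsmore$-approximate global minimum vertex cut with high probability.

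\textbf{The main obstacle} — and the reason this is a separate theorem rather than an immediate corollary — is controlling the number of rooted instances so that the overhead is only a \emph{logarithmic}, not polynomial, factor. A naive reduction that guesses $\sizeof{S}$ and samples $\bigO{(n/\sizeof{S})\log n}$ roots would incur an $\bigO{n}$ factor when $\sizeof{S}$ is constant, which is far too expensive for almost-linear time. The key realization, which the remark before the theorem hints at, is that because we only need a $\epsmore$-approximation, we do \emph{not} need to find a root that avoids $S \cup N^-(S)$ exactly — it suffices to find a root $r$ from which the minimum rooted $r$-cut has value at most $\epsmore\lambda$, and such a root is plentiful: any $r$ outside $N^-(S)$ with $r \notin S$ works, and since $\sizeof{N^-(S)} \leq \lambda$, the set of ``bad'' roots on the source side is small. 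Combined with the symmetric reverse-graph argument handling the case where $S$ itself is small, we get that a constant fraction of all vertices are good roots in at least one of the two directions, so $\bigO{\log n}$ uniformly sampled roots (in each direction) suffice with high probability, with \emph{no} guessing of $\sizeof{S}$ needed. Each of the $\bigO{\log n}$ roots in each of the $\bigO{\log n}$ connectivity probes triggers one call to \reftheorem{apx-rooted-vc}, which costs $\bigO{\log^4 n / \eps}$ flows, for a total of $\bigO{\log^5 n / \eps}$ single-commodity flows on graphs of $\bigO{m}$ edges; the truncation-to-polynomial-capacities bookkeeping is inherited unchanged from the rooted case. I would close by noting that the case $\lambda > n/2$ must be dispatched separately — but then $G$ is so dense ($m = \bigOmega{n^2}$) that even $\bigO{n}$ rooted computations fit within the almost-linear bound relative to $m$, or one simply observes $\lambda \leq n-1$ always and absorbs this into constants.
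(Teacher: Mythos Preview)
Your reduction has a real gap in the regime where the optimal cut capacity $\optlambda$ is a large fraction of the total vertex capacity $\capacity{V}$. First, the inequality $\sizeof{\inneighbors{S}} \leq \lambda$ that drives your ``constant fraction of vertices are good roots'' claim presumes unit vertex capacities; the theorem is stated for general $\capacity: V \to \preals$, where the cut set $\inneighbors{S}$ can contain almost all vertices while still having small total capacity. This can be patched by sampling roots proportionally to capacity, so that the bad set has probability mass $\optlambda/\capacity{V}$. But then your argument only yields a constant hitting probability when $\optlambda \leq \capacity{V}/2$; when $\optlambda$ lies in the range $(\capacity{V}/2,\, \capacity{V}/(1+\eps))$ you need $\bigOmega{\log{n}/\eps}$ roots, which gives only $\bigO{\log^5{n}/\eps^2}$ flows --- exactly the paper's ``warm-up'' bound, not the claimed $\bigO{\log^5{n}/\eps}$. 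Your proposed dispatches for large $\lambda$ do not close this gap: the density argument (``$m = \bigOmega{n^2}$ so $\bigO{n}$ rooted calls fit in almost-linear time'') is off by a factor of $n$ even in the unweighted case and is meaningless for general capacities, and ``$\lambda \leq n-1$ always, absorb into constants'' is not an argument.

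The paper removes the extra $1/\eps$ factor by a different mechanism. It keeps the potentially large root count
\begin{math}
  \ell = \min\setof{\bigO{\log{n}/\eps},\, \bigO{\capacity{V}\log{n}/(\capacity{V}-\delta)}},
\end{math}
where $\delta$ is the minimum singleton-cut capacity, but \emph{prunes} each rooted instance: for a root $r$, all in-arcs to vertices in $\outneighbors{r}$ other than the arc from $r$ itself can be deleted without changing the rooted connectivity. A calculation adapted from \cite{JMY25} shows that the expected number of surviving edges for a capacity-proportional random root is at most $m(\capacity{V}-\delta)/\capacity{V}$, so summed over all $\ell$ roots the total is $\bigO{m\log n}$ regardless of which branch of the $\min$ is active. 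Grouping these pruned instances into $\bigO{\log n}$ batches of $\bigTheta{m}$ edges and applying the rooted algorithm to each batch gives the stated $\bigO{\log^5{n}/\eps}$ flows. This pruning-and-amortization step is the ingredient your proposal is missing.
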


\begin{proof}
  We warm up with a simple but suboptimal reduction. Let
  $\ell = \bigO{\log{n} / \eps}$, and let $r_1,\dots,r_{\ell}$
  independently sample $\ell$ root vertices in proportion to their
  capacities. For each $i$, compute the minimum $r_i$-vertex cut in
  both $G$ and the graph $G_{\textsc{rev}}$ reversing all the edges in
  $G$. Return the minimum vertex cut over all these instances.

  If any $r_i$ is not in the minimum vertex cut, then we obtain a
  $\epsmore$-minimum cut with high probability. Letting $\optlambda$
  denote the vertex connectivity, all $\ell$ sampled roots are in the
  minimum vertex cut with probability
  $\parof{\optlambda / \capacity{V}}^{\ell}$. Either
  $\capacity{V} \leq \epsmore \lambda$, and any vertex cut is a
  $\epsmore$-approximate minimum cut, or else the probability of failing
  to get a good root is at most
  \begin{align*}
    \prac{\optlambda}{\capacity{V}}^{\ell} \leq \prac{1}{1+\eps}^{\ell}
    \leq \frac{1}{\poly{n}}.
  \end{align*}
  Thus we return an $\epsmore$-minimum vertex cut with high
  probability. The overall running time is a factor
  $\ell = \bigO{\log{n} / \eps}$ over the bounds for rooted connectivity
  in \reftheorem{apx-rooted-vc}.

  To remove a $(1/\eps)$-factor from the running time, we adapt an
  argument from \cite{JMY25}.  Let
  $\delta = \min{\capacity{\vincut{v}},\capacity{\voutcut{v}}}$ be the
  minimum capacity over all singleton cuts, and reduce the number of
  sampled roots $\ell$ to
  \begin{align*}
    \ell = \min{                                             %
    \bigO{\frac{\log n}{\eps}}, %
    \bigO{\frac{\capacity{V} \log n}{\capacity{V} - \delta}} %
    }.
  \end{align*}
  We have already seen that $\bigO{\log{n} / \eps}$ sampled roots
  suffice to obtain an $\epsmore$-minimum vertex cut with high
  probability. If $\ell = \bigO{c(V) \log{n} / (c(V) - \delta)}$, then
  as observed in \cite{JMY25}, the probability that all sampled roots
  are in the minimum vertex cut is at most
  \begin{align*}
    \prac{\optlambda}{\capacity{V}}^{\ell} \leq
    \prac{\delta}{\capacity{V}}^{\ell} = \parof{1 -
    \frac{c(V)-\delta}{c(V)}}^{\ell} \leq \frac{1}{\poly{n}}.
  \end{align*}
  So with high probability, at least one of the sampled roots is not in
  the minimum vertex cut, and we return an $\epsmore$-minimum vertex cut
  with high probability.

  Observe that for a particular rooted vertex connectivity problem
  with root $r$, we can remove all incoming arcs to any out-neighbors
  $u$ of $r$ except for the arc $(r,v)$ coming from $r$, without
  effecting the $r$-connectivity.  An arc $(u,v)$ is removed if either
  $v$ or an in-neighbor of $v$ other than $u$ is sampled. This occurs
  with probability
  \begin{align*}
    p_e = \frac{\capacity{v} + \capacity{\inneighbors{v}} - \capacity{u}}{\capacity{V}}.
  \end{align*}
  By linearity of expectation over all edges $e$, the expected number of
  edges we drop is
  \begin{align*}
    \sum_{e \in E} p_e
    &= \frac{1}{\capacity{V}} \sum_{(u,v) \in E}
      \capacity{v} + \capacity{\inneighbors{v}} - \capacity{u} %
    \\
    &=
      \frac{1}{\capacity{V}} \sum_{v \in V} \capacity{v} \outdegree{v} +
      \parof{\indegree{v} - 1} \capacity{\inneighbors{v}}
    \\
    &\tago{=}
      \frac{1}{\capacity{V}} \sum_{v \in V} \indegree{v}
      \capacity{\inneighbors{v}}
      \geq
      \frac{m \delta}{\capacity{V}}.
  \end{align*}
  In \tagr, the $\outdegree{v} \capacity{v}$ is distributed to each
  out-neighbor of $v$. Each vertex $v$ collects a total capacity of
  $\inneighbors{v}$, canceling out the $- \capacity{\inneighbors{v}}$.
  Thus, we expect
  \begin{align*}
    m - \sum_e p_e \leq m - \frac{m \delta}{\capacity{V}} = m \prac{c(V) - \delta}{c(V)}
  \end{align*}
  edges in the pruned subgraph for a randomly sampled root $r$. Over all
  $\ell$ roots, we expect
  \begin{align*}
    \ell m \prac{c(V) - \delta}{c(V)} \leq \bigO{m \log n}
  \end{align*}
  edges total over all pruned subgraphs.

  For each of the $\ell$ roots, we spend linear time pruning the
  graph, and then time bounded by $\bigO{\log{n}^4 / \eps}$
  single-commodity flow problems over the remaining subgraph. The
  total number of edges over all subproblems is $\bigO{m \log n}$ in
  expectation. For the sake of bounding the number of flow problems of
  size $\bigO{m}$, we can group together problems across roots so that
  the total number of edges in each flow problem is
  $\bigTheta{m}$. Altogether this gives $\bigO{\log{n}^5/\eps}$
  problems over graphs of size $\bigO{m}$. These flow problems
  dominate the $\bigO{\ell m} = \bigO{m \log{n} / \eps}$ time
  preprocessing the graph for each root.
\end{proof}

\subsection{Small vertex connecitivities}

We close with a brief discussion on small connectivities and
\cref{small-vc}.  Suppose now that $G$ has integer capacities and
global vertex connectivity $\optlambda$. Then \reftheorem{apx-vc}
implies an exact algorithm by taking
$\eps = \bigOmega{1 / \optlambda}$ running in
$\bigO{\optlambda \log{n}^5}$ flows on graphs of size $m$.

To improve the logarithmic factors, recall that the rooted
connectivity algorithm in \reftheorem{apx-rooted-vc} has an outer loop
executing a binary search for $\optlambda$ out of polynomially many
possible values, generating a $\log{n}$-factor in the running time.  A
probe for a fixed connectivity parameter $\lambda$ takes
$\bigO{\log{n}^3 / \eps}$ flows.

We make two adjustments. First, we adjust the binary search to never
exceed $2 \optlambda$, by first doubling from $1$ until reaching some
$\lambda \geq \optlambda$, and then searching within
$[1,\lambda]$. Second, for a single problem with connectivity
parameter $\lambda$, we set $\eps$ to be $1 / (1 + \lambda)$. The
first adjustment effectively replaces a $\log{n}$ factor with a
$\log{\lambda}$ factor, and the second allows us to infer the right
parameter $\eps$ without knowing $\optlambda$ \emph{a priori}.

Altogether, we have an exact algorithm for global vertex connectivity
running in time bounded by $\bigO{\lambda \log \lambda \log^4 n}$
flows on graphs of size $\bigO{m}$. For rooted connectivity we drop
the additional $\log{n}$ factor incurred by the reduction from global
vertex connectivity, giving a running time bounded
$\bigO{\lambda \log \lambda \log^3 n}$ flows on graphs of size
$\bigO{m}$, and completing the proof of \cref{small-vc}.

\printbibliography[nottype=proceedings]


\end{document}
